\begin{document}
%
\title{Efficient and Robust Allocation Algorithms in Clouds under Memory Constraints}


\author{Olivier Beaumont\\
    Inria\\
    Bordeaux, France\\
    Olivier.Beaumont@inria.fr
  \and
Lionel Eyraud-Dubois\\   Inria\\
    Bordeaux, France\\
    Lionel.Eyraud-Dubois@inria.fr
  \and
Paul Renaud-Goud\\
    Inria\\
    Bordeaux, France\\
    Paul.Renaud-Goud@inria.fr
}


%


\maketitle

\begin{abstract}
  We consider robust resource allocation of services in Clouds. More specifically, we consider the
  case of a large public or private Cloud platform that runs a relatively small set of
  large and independent services. These services are characterized by their demand along several dimensions (CPU, memory,$\dots$) and by their quality of service requirements, that
  have been defined through an SLA in the case of a public Cloud or fixed by the
  administrator in the case of a private Cloud. This quality of service defines the
  required robustness of the service, by setting an upper limit on the probability that
  the provider fails to allocate the required quantity of resources. This maximum probability of failure can be transparently turned into a pair $(price,penalty)$. Failures can
  indeed hit the platform, and resilience is provided through service replication.

  Our contribution is two-fold. Firstly, we propose a resource allocation strategy
  whose complexity is logarithmic in the number of resources, what makes it very efficient for large platforms. Secondly, we propose an
  efficient algorithm based on rare events detection techniques in order to estimate
  the robustness of an allocation, a problem that has been proven to be
  \#P-complete. Finally, we provide an analysis of the proposed strategy through an extensive set of simulations, both in terms of the overall number of allocated resources and in terms of time necessary to compute the allocation.
\end{abstract}

{\bf Keywords:}
  Cloud; reliability; failure; service; allocation; bin packing;
  linear program; memory; CPU; column generation; large scale;
  probability estimate; replication; resilience;

%

\newcommand{\ema}[1]{\ensuremath{#1}\xspace}
\newcommand{\inter}[2][1]{\{ #1 , \dots , #2 \}}
\newcommand{\drond}[2]{\frac{\partial #1}{\partial #2}}
\newcommand{\eps}{\ema{\varepsilon}}
\newcommand{\ceil}[1]{\ema{\left\lceil #1 \right\rceil}}
\newcommand{\proba}{\ema{\mathbb{P}}}
\newcommand{\pro}[1]{\ema{\proba\left( #1\right)}}
\newcommand{\pfail}[1]{\ema{\proba_{#1}^{\mathrm{(fail)}}}}
\newcommand{\bino}[2]{\ema{\mathcal{B}\left( #1,#2\right)}}
\newcommand{\argmin}{\mathrm{argmin}}
\newcommand{\cardi}[1]{\ema{\mathit{card} \left\{ #1 \right\}}}

\newcommand{\ie}{{\it i.e.}\xspace}
\newcommand{\aprio}{{\it a priori}\xspace}
\newcommand{\para}[1]{\vspace*{.2cm} {\bf \noindent #1}}
\newtheorem{theorem}{Theorem}
\newtheorem{lemma}{Lemma}
\newtheorem{remark}{Remark}

\newcommand{\homo}{{\it homogeneous}\xspace}
\newcommand{\napp}{\textsc{No-Approx}\xspace}
\newcommand{\norapp}{\textsc{Normal-Approx}\xspace}

\newcommand{\mach}[1]{\ema{\mathcal{M}_{#1}}}
\newcommand{\nm}{\ema{m}}
\newcommand{\fail}{\ema{f}}
\newcommand{\isal}[1]{\ema{\mathit{is\_alive}_{#1}}}
\newcommand{\serv}[1]{\ema{\mathcal{S}_{#1}}}
\newcommand{\ns}{\ema{\mathit{ns}}}
\newcommand{\demand}[1]{\ema{d_{#1}}}
\newcommand{\rel}[1]{\ema{r_{#1}}}
\newcommand{\zr}[1]{\ema{z_{\rel{#1}}}}
\newcommand{\conf}[1]{\ema{\mathcal{C}_{#1}}}
\newcommand{\nc}[1]{\ema{\mathit{nc}_{#1}}}

\newcommand{\mem}[1][]{\ema{M_{#1}}}
\newcommand{\cpu}[1][]{\ema{C_{#1}}}

\newcommand{\pen}[1]{\ema{B_{#1}}}
\newcommand{\dem}[1]{\ema{K_{#1}}}
\newcommand{\allh}[1]{\ema{A_{#1}}}
\newcommand{\all}[2]{\allh{#1,#2}}
\newcommand{\allt}[2]{\ema{a_{\ifthenelse{\equal{#1}{}}{}{#1,}#2}}}
\newcommand{\alla}[2]{\ema{s_{#1,#2}}}
\newcommand{\allf}[2]{\ema{x_{#1,#2}}}
\newcommand{\alcp}[1]{\ema{\mathit{Alive\_cpu}_{#1}}}
\newcommand{\nb}[1]{\ema{n_{#1}}}
\newcommand{\nbe}[1]{\ema{\nb{#1}^{*}}}
\newcommand{\imi}{\ema{i^{-}}}
\newcommand{\ima}{\ema{i^{+}}}
\newcommand{\lam}[1]{\ema{\lambda_{#1}}}
\newcommand{\pd}{\ema{X}}
\newcommand{\pdi}[1]{\ema{\pd_{#1}}}
\newcommand{\pdmi}{\ema{\pd^{(\min)}}}
\newcommand{\pdma}{\ema{\pd^{(\max)}}}

\newcommand{\demt}[2]{\ema{\demand{#1}^{T_{#2}}}}
\newcommand{\nt}[1]{\ema{\mathit{nt}_{#1}}}
\newcommand{\sasi}{\ema{N}}
\newcommand{\rvbino}[1]{\ema{\mathit{Alive}_{#1}}}
\newcommand{\valbino}[2]{\ema{X_{#1}^{\ifthenelse{\equal{#2}{}}{}{(#2)}}}}
\newcommand{\sumall}[1]{\ema{\mathit{sum\_al}\left( #1 \right)}}

\newcommand{\maxshort}[1]{%
  \mathrm{max~} #1 \mathrm{~s.t.~}
}
\newcommand{\minshort}[1]{%
  \mathrm{min~} #1 \mathrm{~s.t.~}
}
\newcommand{\minvshort}[1]{%
  \begin{array}{c}
    \mathrm{min}\\ #1 \\\mathrm{s.t.}
  \end{array}
}
\newcommand{\minlong}[1]{%
  \mathrm{minimize~~} #1 \mathrm{~~such~that~~}
}
\newcommand{\buildproblemn}[3]{%
  \renewcommand\arraystretch{1.3}
  \begin{equation}
    #1 \quad \left\{
      \begin{array}{*1{>{\displaystyle}l}} #2 \end{array} \right.
    #3
  \end{equation}
  \renewcommand\arraystretch{1}}
\newcommand{\buildproblem}[2]{%
  \renewcommand\arraystretch{1.3}
  \begin{equation*}
    #1 \quad \left\{
      \begin{array}{*1{>{\displaystyle}l}} #2 \end{array} \right.
  \end{equation*}
  \renewcommand\arraystretch{1}}

\renewcommand{\algorithmicrequire}{\textbf{Input:}}
\renewcommand{\algorithmicensure}{\textbf{Output:}}
\newcommand{\dcur}{d\_\mathit{cur}}
\newcommand{\minmach}{\mathit{min\_mach}}
\newcommand{\thres}{\mathit{thres}}
\newcommand{\currel}{\mathit{cur\_rel}}
\newcommand{\curthr}{\mathit{cur\_thres}}

\section{Introduction}

Recently, there has been a dramatic change in both the platforms and the
applications used in parallel processing. On the one hand, there has been a dramatic scale change,
that is expected to continue both in data centers and in exascale
machines. On the other hand, a dramatic
simplification change has also occurred in the application models and scheduling
algorithms. On the application side, many large scale applications are
expressed as (sequences of) independent tasks, such as MapReduce~\cite{shih2010performance,dean2008mapreduce,zaharia2008improving}
applications or even run as independent services handling requests.

In fact, the main reason behind this paradigm shift is not related to
scale but rather to unpredictability. First of all, estimating the
duration of a task or the time of a data transfer is extremely
difficult, because of NUMA effects, shared platforms, complicating
network topologies and the number of concurrent
computations/transfers. Moreover, given the number of involved
resources, failures are expected to happen at a frequency such that
robustness to failures is a crucial issue for large scale applications
running on Cloud platforms. In this context, the cost of purely
runtime solutions, agnostic to the application and based either on
checkpointing strategies~\cite{bougeret2011checkpointing,cappello2010checkpointing,bouteiller2013multi} or application
replication \cite{wang2009improving,ferreira2011evaluating} is expected to be large, and there
is a clear interest for application-level solutions that take the inner
structure of the application to enforce fault-tolerance.

In this paper, we will consider reliability issues in a very simple
context, although representative of many Cloud applications. More
specifically, we will consider the problems that arise when allocating
independent services running as Virtual Machines (VMs) onto Physical
Machines (PMs) in a Cloud Computing
platform~\cite{zhang2010cloud,armbrust2009above}. The platforms that
we target have a few crucial properties. First, we assume that the
platform itself is very large, in terms of number of physical
machines. Secondly, we assume that the set of services
running on this platform is relatively small, and that each
service requires a large number of resources. Therefore, our
assumptions corresponds well to a datacenter or a large private Cloud
such as those presented in~\cite{CirneInvited}, but not at all to a
Cloud such as Amazon EC2~\cite{amazon} running a huge number of small
applications. 

In the static case, mapping VMs with heterogeneous computing demands
onto PMs with capacities is amenable to a multi-dimensional
bin-packing problem (each dimension corresponding to a different kind of resource, memory, CPU, disk, bandwidth,$\dots$). Indeed, in this context, on the Cloud
administrator side, each physical machine
comes with its computing capacity (\ie the number of flops it can
process during one time-unit), its disk capacity (\ie the number of
bytes it can read/write during one time-unit), its network capacity
(\ie the number of bytes it can send/receive during one time-unit),
its memory capacity (given that each VM comes with its complete
software stack) and its failure rate (\ie the probability that the
machine will fail during the next time period).  On the client side,
each service comes with its requirement along the same dimensions
(memory, CPU, disk and network footprints) and a reliability demand
that has been negociated through an SLA~\cite{CirneInvited}.

In order to deal with resource allocation
problems in Clouds, several sophisticated techniques have been developed in
order to optimally allocate VMs onto PMs, either to achieve good
load-balancing~\cite{van2009sla,calheiros2009heuristic,christopherTPDS} or
to minimize energy
consumption~\cite{berl2010energy,beloglazov2010energy}. Most of the
works in this domain have therefore focused on designing
offline~\cite{GareyJohnson} and
online~\cite{EpsteinvanStee07binpackresaumg,Hochbaum97} solutions of
Bin Packing variants.

In this paper, we propose to reformulate these heterogeneous resource
allocation problems in a form that takes advantage on the assumptions
we made on the platform and the characteristics of VMs. This set of assumption is crucial for the algorithms we propose. In this perspective, since we assume that the number of processing resources \nm is very large, so that we will focus on resource allocation algorithms whose complexity is low (in practice logarithmic) in \nm. We also assume that each VM comes with its full software stack, so that the number of different services $K_{\max}$ that can actually run on the platform is very small, and will be treated as a small constant. We also assume that the number of services \ns is also small. Typically, we will propose algorithms that will rely on the partial enumeration of the possible configurations of a set of PMs (\ie the set of applications they run) and more precisely on column generation techniques~\cite{barnhart1998branch,desrosiers2005primer} in order to solve efficiently the optimization problems.
In Section~\ref{sec.simus}, we will provide a detailed analysis of the resource allocation algorithm that we propose in this paper for a wide set of practical parameters. The cost of the algorithm will be analyzed both in terms of their processing time to find the allocation and in terms of the quality of the computed allocation (\ie required number of resources).

Reliability constraints have received much less attention in the
context of Cloud computing, as underlined by Cirne {\it et
  al.}~\cite{CirneInvited}. Nevertheless, reliability issues have been
addressed in more distributed and less reliable systems such as
Peer-to-Peer networks. In such systems, efficient data sharing is
complicated by erratic node failure, unreliable network connectivity
and limited bandwidth. In this case, data replication can be used to
improve both availability and response time and the question is to
determine where to replicate data in order to meet simultaneously
performance and availability requirements in large-scale
systems~\cite{replication_availability,replication_cirne_1,replication_datagrid,replication_DB,replication_cirne_2}.
Reliability issues have also been addressed by High Performance
Computing community. Indeed, the exascale
community~\cite{dongarra2009international,eesi} underlines the
importance of fault tolerance issues~\cite{cappello2009fault} and
proposed solutions based either on replication
strategies~\cite{ferreira2011evaluating,wang2009improving} or
rollback recovery relying on checkpointing
protocols~\cite{bougeret2011checkpointing,cappello2010checkpointing,bouteiller2013multi}.

This work is a follow-up of~\cite{beaumont:hal-00743524}, where the
question of how to evaluate the reliability of an allocation has been
addressed. One of the main results of~\cite{beaumont:hal-00743524} was
that estimating the reliability of a given allocation was already a
\#P-complete
problem~\cite{valiant1979complexity,provan1983complexity,bodlaender2004note}. In
this paper, we prove that rare event detection
techniques~\cite{algo-rare} developed by the Applied Probability
community are in fact extremely efficient in practice to circumvent
this complexity result. This paper is also a follow-up
of~\cite{beaumont2013hipc}, where asymptotically approximation
algorithms for energy minimization have been proposed in the context
where services were defined by their processing requirement only (and
not their memory requirement). In~\cite{beaumont2013hipc},
approximation techniques to estimate the reliability of an allocation
were based on the use of Chernoff~\cite{chernoff} and
Hoeffding~\cite{hoeffding} bounds. In the present paper, we propose
different techniques based on the approximation of the binomial distribution by
the Gaussian distribution.

The paper is organized as follows. In Section~\ref{sec.frame}, we
present the notations that will be used throughout this paper and we
define the characteristics of both the platform and the services that
are suitable for the techniques we propose. In
Section~\ref{sec.resolution}, we propose an algorithm for
solving the resource allocation problem under reliability
constraints. It relies on a pre-processing phase that is used to
decompose the problem into a reliability problem and a packing
problem. In Section~\ref{proba.est}, we propose a new technique based
on rare event detection techniques to estimate the reliability of an
allocation, and we prove that this technique is very efficient in our
context. At last, we present in Section~\ref{sec.simus} a set of
detailed simulation results that enable to analyze the performance of the algorithm
proposed in Section~\ref{sec.resolution} both in terms of the quality
of returned allocations and processing time. Concluding remarks
are presented in Section~\ref{sec.conclusion}.

\section{Framework}
\label{sec.frame}

\subsection{Platform and services description}

In this paper, we assume the following model. On the one hand, the platform is composed
of \nm homogeneous machines $\mach{1}, \dots, \mach{\nm}$, that have
the same CPU capacity \cpu and the same memory capacity \mem. On the
other hand, we aim at running \ns services $\serv{1}, \dots, \serv{\ns}$,
that come with their CPU and memory requirements. 
In this context, our goal is to minimize the number of used machines, and at to find an allocation of the services onto the machines
such that all packing constraints are fulfilled. Nevertheless, this problem is not equivalent to a 
classical multi-dimensional packing problem.  Indeed, the two
requirements are of a different nature.

On the one hand, services are heterogeneous from a CPU perspective,
hence each service \serv{i} expects that it will be provided a total
computation power of \demand{i} (called demand) among all the
machines. In addition, CPU sharing is modeled in a fluid manner:
on a given machine, the fraction of the total CPU dedicated to a given
service can take any (rational) value.  This expresses the fact that the
sharing between services which are running on a given machine is done
through time multiplexing, whose grain is very fine.

On the other hand, memory requirements are homogeneous among all
services, and memory requirements cannot be partially allocated: running a service
on a machine occupies one unit of memory of this machine, regardless
of the amount of computation power allocated to this service. This
assumption models the fact that most of the memory used by virtual
machines comes from the complete software stack image that needs to be
deployed. On the other hand, the homogeneous assumption is not a strong
requirement, and our algorithms could be modified to account for
heterogeneous memory requirements (at the price of more complex notations). Furthermore, since the complete software stack image is needed,
we assume that the memory capacity of the machines is not very high,
\ie each machine can hold at most 10 services.

\subsection{Failure model}

In this paper, we envision large-scale platforms, which means that
machine failures are not uncommon and need to be taken into
account. Two techniques are usually set up to face those machine
failures: migration and replication. The response time of migrations
may be too high to ensure continuity of the services. Therefore, we concentrate in this
paper on a phase that occurs between two migration and reallocation operations, that are scheduled every $x$ hours. The migration and reallocation strategy is out of the scope of this paper and we rather concentrate on the use of replication in order to provide the
resilience between two migration phases. The SLA defines the robustness properties that the allocation should have.

More specifically, we assume that machine
failures are independent, and that machines are homogeneous also with
regard to failures. We denote \fail the probability that a given
machine fails during the time period between two migration phases.
Because of those failures, we cannot ensure that a service will have
enough computational power at its disposal during the whole time
period. The probability that all machines in the platform fail is
indeed positive. Therefore, in our model each service \serv{i} is also
described with its reliability requirement \rel{i}, which expresses a
constraint: the probability that the service has not enough
computational power (less than its demand \demand{i}) at the end of
the time period must be lower than \rel{i}. 

In this context, replicating a given service of many machines whose failure are independent, it will be possible to achieve any reliability requirement. Our goal in this paper is to do it for all services simultaneously, \ie to enforce that capacity constraints, reliability requirements and 
service demands will be satisfied, while minimizing the number of required machines.

\
subsection{Problem description}

We are now ready to state precisely the problem. Let \all{i}{j} be
the CPU allocated to service \serv{i} on machine \mach{j}, for all $i
\in \inter{\ns}$ and $j \in \inter{\nm}$. For all $j \in \inter{\nm}$,
we denote \isal{j} the random variable which is equal to 1 if machine
\mach{j} is alive at the end of the time period, and 0 otherwise. We
can then define, for all $i \in \inter{\ns}$, the total CPU amount
that is available to service \serv{i} at the end of the time period:
$\alcp{i} = \sum_{j=1}^{m} \isal{j} \times \all{i}{j}$. The problem of
the minimization of the number of used machines can be written as:
\begin{numcases}{\minvshort{\nm}}
  \forall i, \pro{ \alcp{i} < \demand{i}} < \rel{i} \label{eq.init.rel}\\ 
  \forall j,  \cardi{ \all{i}{j} \neq 0 \, ; i \in \inter{\ns}} \leq \mem
  \label{eq.init.mem}\\
  \forall j,  \sum_{i=1}^{\ns} \all{i}{j} \leq \cpu \label{eq.init.cpu}
\end{numcases}

Equations~\eqref{eq.init.mem} and~\eqref{eq.init.cpu} depict the packing
constraints, while Equation~\eqref{eq.init.rel} deals with reliability
requirements.

We will use in this paper two approaches for the estimation of the
reliability requirements. In the \napp model, the reliability
constraint is actually written $\pro{ \alcp{i} < \demand{i}} <
\rel{i}$.

However, as previously stated, given an allocation of one service onto the
machines, deciding whether this allocation fulfills the reliability
constraint or not is a \#P-complete
problem~\cite{beaumont:hal-00743524}; this shows that estimating this
reliability constraint is a hard task.  In~\cite{nous-resilience}, it has been observed that, based on the
approximation of a binomial distribution by a Gaussian distribution,
$\pro{ \alcp{i} < \demand{i}} < \rel{i}$ is approximately equivalent
to
\begin{eqnarray*}
  \sum_{j=1}^{\nm} \all{i}{j} - \pen{i} \sqrt{ \sum_{j=1}^{\nm} \all{i}{j}^2}
  \geq \dem{i}, \mathrm{~~where}\\
  \dem{i}=\frac{\demand{i}}{1-\fail} \quad \mathrm{and} \quad
  \pen{i} = \zr{i} \times \sqrt{\frac{\fail}{1-\fail}}.
\end{eqnarray*}
$\zr{i}$ is a characteristic of normal distributions, and only depends
on $\rel{i}$, therefore it can be tabulated beforehand. In the
following, we will denote this model by the \norapp model.

\medskip

Both packing and fulfilling the reliability constraints are hard
problems on their own, and it is even harder to deal with those two
issues simultaneously.  In the next section, we describe the way we
solve the global problem, by decomposing it into two sub-problems that
are easier to tackle.

\medskip

\section{Problem resolution}
\label{sec.resolution}

We approach the problem through a two-step heuristic. The first step
focuses mainly on reliability issues. The general idea about
reliability is that, for a given service, in order to keep the
replication factor low (and thus reduce the total number of machines
used), the service has to be divided into small slices and distributed
among sufficiently many machines. However, using too many small slices for each service
would break the memory constraints (remember that the memory requirement associated to a service is the same whatever the size of slice, as soon as it is larger than 0). The goal of the first step,
described in Section~\ref{sec.homo}, is thus to find reasonable slice
sizes for each service, by using a relaxed packing formulation which
can be solved optimally.

In a second step, described in Sections
~\ref{sec.cg}, we compute the actual packing of those service slices
onto the machines.  Since the number of different services allocated
to each machine is expected to be low (because of the memory
constraints), we rely on a formulation of the problem based on the partial
enumeration of the possible configurations of machines, and we use
column generation techniques~\cite{barnhart1998branch,desrosiers2005primer} to limit the number of different configurations.

\subsection{Focus on reliability}
\label{sec.homo}

In this section, we describe the first step of our approach: how to
compute allocations that optimize the compromise between reliability
and packing constraints, under both \napp and \norapp models. This is
done by considering a simpler, relaxed formulation of the problem,
that can be solved optimally. We start with the \norapp model.

\subsubsection{\norapp model}

As stated before, in this first phase, we relax the problem by
considering global capacities instead of capacities per machine. Thus
we dispose of a total budget $\nm \mem$ for memory requirements and
$\nm \cpu$ for CPU needs, and use the following formulation:
\begin{numcases}{\minvshort{\nm}}
  \forall i, \sum_{j=1}^{\nm} \all{i}{j} - \pen{i} \sqrt{ \sum_{j=1}^{\nm} \all{i}{j}^2}
  \geq \dem{i} \label{eq.pr.homo.r}\\ 
  \sum_{j=1}^{\nm} \mathit{card} \left\{ \all{i}{j} \neq 0 \, ;
    i \in \inter{\ns} \right\}\leq \nm \mem \label{eq.pr.homo.m} \\
  \sum_{j=1}^{\nm}  \sum_{i=1}^{\ns} \all{i}{j} \leq \nm \cpu \label{eq.pr.homo.c}
\end{numcases}

In the following, we prove that this formulation can be solved
optimally. We define a class of solutions, namely \homo allocations,
in which each service is allocated on a set of machines, with the same
CPU requirement. Formally, an allocation is \homo if for all $i \in
\inter{\ns}$, there exists $\allh{i}$ such that for all $j \in
\inter{\nm}$, either $\all{i}{j} = \allh{i}$ or $\all{i}{j}=0$.

\begin{lemma}
  On the relaxed problem, \homo allocations is a dominant class of
  solutions.
\end{lemma}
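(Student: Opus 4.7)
The plan is to show dominance by an explicit averaging argument: given any feasible solution of the relaxed program, equalize the nonzero CPU shares of each service across the machines on which it is placed, and verify that all three constraints still hold while keeping the same $\nm$.

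Concretely, let $(\all{i}{j})$ be a feasible allocation for some value $\nm$. For each service $i$, define $\nbe{i} = \cardi{j \in \inter{\nm} \;;\; \all{i}{j} \neq 0}$ and $S_i = \sum_{j=1}^{\nm} \all{i}{j}$, and construct a new allocation $(\allh{i,j}')$ by choosing an arbitrary subset of $\nbe{i}$ machines (for instance the ones already used by service $i$) and setting $\allh{i,j}' = S_i/\nbe{i}$ on those and $0$ elsewhere. By construction this allocation is \homo with $\allh{i} = S_i/\nbe{i}$.

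Next I would check the three constraints. The memory constraint \eqref{eq.pr.homo.m} is preserved because the number of nonzero entries per service is unchanged, so $\sum_i \nbe{i}$ is unchanged. The CPU constraint \eqref{eq.pr.homo.c} is preserved because $\sum_j \allh{i,j}' = S_i = \sum_j \all{i}{j}$. The only nontrivial point is the reliability constraint \eqref{eq.pr.homo.r}: the linear part $\sum_j \all{i}{j} = S_i$ is unchanged, so it suffices to show that $\sum_j (\allh{i,j}')^2 \leq \sum_j \all{i}{j}^2$. This follows from the Cauchy--Schwarz inequality applied to the $\nbe{i}$ nonzero values of $\all{i}{j}$:
\begin{equation*}
S_i^2 = \Bigl(\sum_{j : \all{i}{j}\neq 0} \all{i}{j}\Bigr)^2 \leq \nbe{i} \sum_{j=1}^{\nm} \all{i}{j}^2,
\end{equation*}
so $\sum_j (\allh{i,j}')^2 = \nbe{i} \cdot (S_i/\nbe{i})^2 = S_i^2/\nbe{i} \leq \sum_j \all{i}{j}^2$. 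Since $\pen{i} \geq 0$, the left-hand side of \eqref{eq.pr.homo.r} can only grow, and the constraint remains satisfied.

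I do not foresee a serious obstacle: the only analytical ingredient is the convexity of $x \mapsto x^2$ (equivalently Cauchy--Schwarz), and the per-service modification is independent across services since the relaxed constraints are purely global. The one detail worth being explicit about is the direction of the inequality in \eqref{eq.pr.homo.r} (a $\geq$ constraint with a $-\sqrt{\cdot}$ term), which is exactly why decreasing $\sum_j \all{i}{j}^2$ helps rather than hurts. Concluding, the \homo allocation $(\allh{i,j}')$ is feasible with the same value $\nm$, so homogeneous allocations form a dominant class.
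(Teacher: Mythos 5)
Your proof is correct and rests on essentially the same idea as the paper's: equalizing a service's nonzero CPU shares preserves the linear sum (hence the memory and CPU totals) while decreasing the sum of squares, which can only help the reliability constraint~\eqref{eq.pr.homo.r}. The paper argues via a single pairwise averaging exchange and concludes by implicit iteration, whereas you perform the full averaging in one step and justify the decrease of the sum of squares by Cauchy--Schwarz; your version is, if anything, slightly more self-contained.
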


\begin{proof}
  Let us assume that there exist $i$, $j_1$ and $j_2 \neq j_1$, such
  that $\all{i}{j_1} > \all{i}{j_2} > 0$. By setting $\all{i}{j_1}' =
  \all{i}{j_2}' = (\all{i}{j_1}+\all{i}{j_2})/2$, we increase the
  left-hand side of Equation~\eqref{eq.pr.homo.r}, and leave unchanged
  the left-hand sides of Equations~\eqref{eq.pr.homo.m}
  and~\eqref{eq.pr.homo.c}. From any solution of the problem, we can
  build another \homo solution, which does not use a larger number of
  machines.
\end{proof}

An \homo allocation is defined by \nb{i}, the number of machines
hosting service \serv{i}, and \allh{i}, the common CPU consumption of
service \serv{i} on each machine it is allocated to. The problem of
finding an optimal \homo allocation can be written as:

\begin{numcases}{\minshort{\nm}}%
  \forall i, \nb{i} \allh{i} - \pen{i} \allh{i} \sqrt{\nb{i}} \geq \dem{i}\label{eq.homo.r}\\ 
  \forall i, \allh{i} \geq 0 \nonumber\\
  \sum_i \nb{i} \leq \nm \mem\nonumber\\
  \sum_i \nb{i} \allh{i}  \leq \nm \cpu\nonumber
\end{numcases}



which can be simplified into
\buildproblemn{\minshort{\nm}}{%
  \forall i, \sqrt{\nb{i}} > \pen{i} \\
  \sum_i \nb{i} \leq \nm \mem \\
  \sum_i \frac{\dem{i}}{1 - \frac{\pen{i}}{\sqrt{\nb{i}}}}  \leq \nm \cpu
}{\label{pb.init}}

In the following, we search for a fractional solution to this problem:
both \nm, the \nb{i}'s and the \allh{i}'s are assumed to be rational
numbers. We begin by formulating two remarks to help solving this
problem.

\bigskip

\begin{remark}
  \label{rem.eq}
  We can restrict to solutions which satisfy the following constraints: 

  \begin{numcases}{}
    \forall i, \sqrt{\nb{i}} > \pen{i} \nonumber\\
    \sum_i \nb{i} = \nm \mem \nonumber\\
    \sum_i \frac{\dem{i}}{1 - \frac{\pen{i}}{\sqrt{\nb{i}}}} = \nm \cpu \nonumber
  \end{numcases}
\end{remark}

\begin{proof}
  For all $i$, $\frac{\dem{i}}{1 - \frac{\pen{i}}{\sqrt{\nb{i}}}}$ is
  a non-increasing function of \nb{i}, thus given a solution of
  Problem~\ref{pb.init}, we can build another solution such that
  $\sum_i \nb{i} = \nm \mem$. Indeed, let $(\nb{1},\dots,\nb{\ns})$ be
  an optimal solution of Problem~\ref{pb.init}, and let $\nb{i}' =
  \nb{i}$ for all $i \in \inter[2]{\ns}$.  Now if we set $\nb{1}' =
  \nm \mem - \sum_{i=2}^\ns \nb{i}$, we have on the one hand
  $\sum_{i=1}^\ns \nb{i}' = \nm \mem$, and on the other hand
  $\sum_{i=1}^\ns \frac{\dem{i}}{1 - \pen{i}/\sqrt{\nb{i}'}} \leq \nm
  \cpu$, since $\nb{1}' \geq \nb{1}$.

  In the following, we only consider such solutions: let
  $(\nb{1},\dots,\nb{\ns})$ be a solution of Problem~\ref{pb.init}
  with \nm machines, and satisfying $\sum_i \nb{i} = \nm \mem$. Let us
  now further assume that, in this solution, $\sum_i \frac{\dem{i}}{1
    - \frac{\pen{i}}{\sqrt{\nb{i}}}} < \nm \cpu$.  We show that such a
  solution is not optimal by exhibiting a valid solution
  $(\nb{1}',\dots,\nb{\ns}')$, which uses $\nm' < \nm$ machines.  We
  set, for all $i \in \inter[2]{\ns}$, $\nb{i}' = \nb{i}$, and we
  define $\nb{1}'$ such that:
  \begin{align*}
    \sqrt{\nb{1}'} = \max \Bigg( & \frac{\sqrt{\nb{1}} + \pen{1}}{2} , \\
    &  \frac{\pen{1}}{1-\frac{2\pen{1}}{\nm \cpu -
        \sum_{i \neq 1} \frac{\dem{i}}{1-\pen{i}/\sqrt{\nb{i}}}
        + \frac{\dem{1}}{1-\pen{1}/\sqrt{\nb{1}}}}}
    \Bigg) .\\
  \end{align*}
  We have firstly $\sqrt{\nb{1}'} \geq (\sqrt{\nb{1}} + \pen{1})/2 > 2
  \pen{1} / 2$, since $\sqrt{\nb{1}} > \pen{1}$.  Furthermore, we
  prove now that $\sqrt{\nb{1}'}<\sqrt{\nb{1}}$.  On the one hand,
  again from $\pen{1} < \sqrt{\nb{1}}$, we obtain $(\sqrt{n_1} +
  \pen{1})/2 < \sqrt{n_1}$. On the other hand, from $\sum_i
  \frac{\dem{i}}{1 - \frac{\pen{i}}{\sqrt{\nb{i}}}} < \nm \cpu$, we
  have
  \[ \frac{\pen{1}}{1-\frac{2\dem{1}}{\nm \cpu -
      \sum_{i \neq 1} \frac{\dem{i}}{1-\pen{i}/\sqrt{\nb{i}}}
      + \frac{\dem{1}}{1-\pen{1}/\sqrt{n_1}}}} < \sqrt{n_1}.\]
  Finally $\sqrt{\nb{1}'}<\sqrt{\nb{1}}$, hence
  $\sum_{i} \nb{i}'  < \nm \mem$.

  The second term of the maximum ensures that:
  \[ \frac{\dem{1}}{1 - \frac{\pen{1}}{\sqrt{\nb{i}'}}} <
  \nm \cpu - \sum_{i \neq 1} \frac{\dem{i}}{1 - \frac{\pen{i}}{\sqrt{\nb{i}}}}. \]

  All together, $(\nb{1}', \nb{2}', \dots, \nb{\ns}')$ satisfies
  \begin{numcases}{}
    \forall i, \sqrt{\nb{i}'} > \pen{i} \nonumber\\ 
    \sum_{i} \nb{i}'  < \nm \mem  \nonumber\\ 
    \sum_{i} \frac{\dem{1}}{1 - \frac{\pen{1}}{\sqrt{\nb{i}'}}} < \nm \cpu  \nonumber
  \end{numcases}
  which implies that $(\nb{1},\dots,\nb{\ns})$ is not an optimal
  solution.

\end{proof}

\bigskip

\begin{remark}
  \label{rem.drond}
  Let us now define $f_1$ and $f_2$ by $f_1\left( \nb{1}, \dots,
    \nb{\ns} \right) = \sum_i \nb{i}$ and $f_2\left( \nb{1}, \dots,
    \nb{\ns} \right) = \sum_i \frac{\dem{i}}{1 -
    \frac{\pen{i}}{\sqrt{\nb{i}}}}$.

  Necessarily, at a solution with minimal \nm, we have:
  \[ \forall i,j \quad \drond{f_2}{\nb{i}} = \drond{f_2}{\nb{j}} .\]
\end{remark}
\begin{proof}
  For a given solution $\nb{} = (\nb{1},\dots,\nb{\ns})$, let us
  assume that there exist $i$ and $j$ such that $\drond{f_2}{\nb{i}} <
  \drond{f_2}{\nb{j}}$. Without loss of generality, we can assume that
  $i<j$. At the first order,
  \begin{multline*}
    f_2\left( \nb{1}, \dots,\nb{i}+\eps,\dots,\nb{j}-\eps,
      \dots, \nb{\ns} \right) =\\
    f_2\left( \nb{1}, \dots, \nb{\ns} \right) + \eps 
    \left( \drond{f_2}{\nb{i}} - \drond{f_2}{\nb{j}} \right) + o(\eps).
  \end{multline*}
  Then there exists $\eps>0$ such that $\nb{j}-\eps> \pen{j}$ and
  $f_2\left( \nb{1}, \dots,\nb{i}+\eps,\dots,\nb{j}-\eps, \dots,
    \nb{\ns} \right) < f_2\left( \nb{1}, \dots,\nb{\ns} \right)$.
  Moreover, in the same way as in Remark~\ref{rem.eq}, we can show
  that there also exists $\eps'$ such that $\nb{j}-\eps-\eps' >
  \pen{j}$ and $f_2\left( \nb{1},
    \dots,\nb{i}+\eps,\dots,\nb{j}-\eps-\eps', \dots, \nb{\ns} \right)
  < f_2\left( \nb{1}, \dots, \nb{\ns} \right) = \nm \cpu$.  By
  remarking that $f_1\left( \nb{1}, \dots,\nb{i}+\eps,\dots,
    \nb{j}-\eps-\eps',\dots, \nb{\ns} \right) < f_1\left( \nb{1},
    \dots, \nb{\ns} \right) = \nm \mem$, we show that $\nb{}$ is not
  an optimal solution.
\end{proof}

\bigskip

Both remarks show that at an optimal solution point, there exists \pd
such that
\[ \forall i \quad -\frac{\pen{i} \dem{i}}{\sqrt{\nb{i}}(\sqrt{\nb{i}} - \pen{i})^2}
= \drond{f_2}{\nb{i}} = \pd.\]

\para{Computing the \nb{i}'s given \pd}

By denoting $x_i = \sqrt{\nb{i}}$, let us consider the following
third-order equation $x_i(x_i - \pen{i})^2 + \frac{\pen{i}
  \dem{i}}{\pd} = 0$.  The derivative is null at $x_i=\pen{i}$ and
$x_i=\pen{i}/3$, and the function tends to $+\infty$ when $x_i
\rightarrow +\infty$. Since we search for $x_i>\pen{i}>0$, we deduce
that for any $\pd < 0$, this equation has an unique solution. Let us
denote $g_i(X)$ the unique value of $\nb{i}$ such that $\sqrt{\nb{i}}$
is a solution to this equation. As $x_i (x_i - \pen{i})^2 \leq x_i^3$,
we know that $x_i \leq \sqrt[3]{-\pen{i} \dem{i}/\pd}$.  We can thus
compute $g_i(X)$ with a binary search inside $]\pen{i},\sqrt[3]{ -
  \pen{i} \dem{i}/\pd}]$, since $x \mapsto x (x - \pen{i})^2 \leq x^3$
is an increasing function in this interval. Incidentally, we note that
for all $i$, $g_i$ is an increasing function of $X$.

\para{Computing \pd}

According to remark~\ref{rem.drond}, for any optimal solution there
exists $X$ such that
\begin{equation}\label{eq.pd}\sum_i g_i(X) = \mem/\cpu \times \sum_i
\frac{\dem{i}}{1 - \frac{\pen{i}}{\sqrt{g_i(X)}}}.
\end{equation}
Since the left-hand side is increasing with \pd, and the right-hand
side is decreasing with \pd, this equation has an unique solution
$\pd^*$ which can be computed by a binary search on \pd. Once $\pd^*$
is known, we can compute the $\nb{i}$ and we are able to derive the
$\allh{i}$'s.  The solution $S^*$ computed this way is the unique
optimal solution: for any optimal solution $S'$, there exists $\pd'$
which satisfies the previous equation. Since this equation has only
one solution, $\pd' = \pd^*$ and $S' = S^*$.

We now show how to compute upper and lower bounds for the binary
search on $X$. As shown previously, we have an obvious upper bound:
$\pd <0$.  We express now a lower bound. Let \nbe{i} be defined, for
all $i$, by

\[ \nbe{i} \geq 0 \mathrm{~~and~~} \nbe{i}=\frac{\mem}{\cpu} \times
\frac{\dem{i}}{1-\frac{\pen{i}}{\sqrt{\nbe{i}}}}.\]

Then $\sqrt{\nbe{i}}$ is a solution of a second-order equation, $
\nbe{i} - \pen{i} \sqrt{\nbe{i}} - \mem\dem{i}/\cpu = 0$, and since
$\nbe{i} \geq 0$, we can compute:
\[ \sqrt{\nbe{i}} = \frac{1}{2} \times \left( \pen{i} +
  \sqrt{\pen{i}^2 + \frac{4\mem\dem{i}}{\cpu}} \right). \] 

Now let
\[ \pdi{i} = - \frac{\pen{i} \dem{i}}%
{\sqrt{\nbe{i}} \left( \sqrt{\nbe{i}} - \pen{i} \right)^2} \quad
\text{and} \quad \imi = \argmin_i \; \pdi{i}. \]

For all~$i$, $\pdi{\imi} \leq \pdi{i}$.  Since $g_i$ is increasing
with $X$, we have $g_i(\pdi{\imi}) \leq g_i(X_i) = \nbe{i}$. Since $x
\mapsto \dem{i}/(1-\pen{i}/\sqrt{x})$ is non-increasing, this implies

\[ \frac{\mem}{\cpu} \frac{\dem{i}}{1-\pen{i}/\sqrt{\nbe{i}}} \leq
\frac{\mem}{\cpu} \frac{\dem{i}}{1-\pen{i}/\sqrt{g_i(\pdi{\imi})}}. \] 

From the definition of \nbe{i}, we can conclude
$$ \sum_i g_i(\pdi{\imi}) \leq \frac{\mem}{\cpu} \times \sum_i
\frac{\dem{i}}{1-\frac{\pen{i}}{\sqrt{g_i(\pdi{\imi})}}}\quad
\Rightarrow \quad \pd^* \geq \pdi{\imi}. $$

With the same line of reasoning, we can refine the upper bound into
$\pd^* \leq \pdi{\ima}$, where $\ima = \mathrm{argmax}_i \; \pdi{i}$,
by showing
\[ \sum_i g_i(\pdi{\ima}) \geq \frac{\mem}{\cpu} \times \sum_i
\frac{\dem{i}}{1-\frac{\pen{i}}{\sqrt{g_i(\pdi{\ima})}}}.\]

\subsubsection{\napp model}

In the previous section, we showed how to compute an optimal solution
to the relaxed problem under the \norapp model, but we have no
guarantee that this solution will meet the reliability constraints
under the \napp model.

Given an homogeneous allocation for a given service \serv{i}, the
amount of alive CPU of \serv{i} follows a binomial law: $\alcp{i} \sim
\allh{i} \times \bino{\nb{i}}{1-\fail}$. We can then rewrite the
reliability constraint, under the \napp model, as $\pro{\allh{i}
  \times \bino{\nb{i}}{1-\fail} < \demand{i}} < \rel{i}$.  This
constraint describes the actual distribution, but since the values
$(\nb{i}, \allh{i})$ have been obtained via an approximation, there is
no guarantee that they will satisfy this constraint.  However, since
the cumulative distribution function of a binomial law can be computed
with a good precision very efficiently~\cite{lib-binomial}, we can
compute $\nb{i}'$, the first integer which meets the constraint. We
can the use equation~\eqref{eq.homo.r} to refine the value of
$\pen{i}$: we compute $\pen{i}'$ so that equation~\eqref{eq.homo.r}
with $\allh{i}$ and $\nb{i}'$ is an equality, so that the
approximation of the \norapp model is closer to the actual
distribution for these given values of $\allh{i}$ and $\nb{i}$.


We compute new \pen{i}'s for all services, and iterate on the
resolution of the previous problem, until we reach a convergence point
where the values of the \pen{i} do not change. In our simulations (see
Section~\ref{sec.simus}), this iterative process converges in at most 10
iterations.

\subsection{Focus on packing}
\label{sec.cg}

\newcommand{\alfull}{\ema{\mathcal{F}}}
\newcommand{\progCG}{\ema{\mathcal{P}}}
\newcommand{\dsknap}{\textsc{Split-Knapsack}\xspace}

In the previous section, we have described how to obtain an optimal
solution to the relaxed problem~\ref{pb.init}, in which \homo
solutions are dominant. In the original problem, packing constraints
are expressed for each machine individually, and the flexibility of
non-homogeneous allocations may make them more efficient. Indeed, an
interesting property of equation~\eqref{eq.pr.homo.r} is that
"splitting'' a service (\ie, dividing an allocated CPU consumption on
several machines instead of one) is always beneficial to the
reliability constraint (because splitting keeps the total sum
constant, and decreases the sum of squares). In this Section, we thus
consider the packing part of the problem, and the reliability issues
are handled by the following constraints: the allocation of service
\serv{i} on any machine $j$ should not exceed \allh{i}, and the total
CPU allocated to \serv{i} should be at least $\nb{i}\allh{i}$. Since
the $(\nb{i}, \allh{i})$ values are such that the homogeneous
allocation satisfies the reliability constraint, the splitting
property stated above ensures that any solution of this packing
problem satisfies the reliability constraint as well.


The other idea in this Section is to make use of the fact that the
number $\mem$ of services which can be hosted on any machine is
low. This implies that the number of different machine configurations
(defined as the set of services allocated to a machine) is not too
high, even if it is of the order of $\ns^M$. We thus formulate the
problem in terms of \emph{configurations} instead of specifying the
allocation on each individual machines. However, exhaustively
considering all possible configurations is only feasible with
extremely low values of \mem (at most 4 or 5). In order to address a
larger variety of cases, we use in this section a standard column
generation method~\cite{barnhart1998branch,desrosiers2005primer} for
bin packing problems.

In this formulation, a configuration \conf{c} is defined by the
fraction \allf{i}{c} of the maximum capacity \allh{i} devoted to
service \serv{i}. According to the constraints stated above,
configuration \conf{c} is \emph{valid} if and only if $\sum_i
\ceil{\allf{i}{c}} \leq \mem$, $\sum_i \allf{i}{c}\allh{i} \leq \cpu$,
and $\forall c, 0\leq \allf{i}{c} \leq 1$.  Furthermore, we only
consider \emph{almost full} configurations, defined as the
configurations in which all services are assigned a capacity either
$0$ or $1$, except at most one. Formally, we restrict to the set
\alfull of valid configurations \conf{c} such that $\cardi{ i, 0 <
  \allf{i}{c} < 1 } \leq 1$.

We now consider the following linear program \progCG, in which there
is one variable \lam{j} for each valid and almost full configuration:

\begin{equation}
  \minshort{\sum_{c\in\alfull}  \lam{c}} \quad \forall i, \sum_{c\in
    \alfull} \lam{c}\allf{i}{c} \geq \nb{i} \label{pb.colgen}
\end{equation}

Despite the high number of variables in this formulation, its simple
structure (and especially the low number of constraints) allows to use
column generation techniques to solve it. The idea is to generate
variables only from a small subset $\alfull'$ of configurations and
solve the problem \progCG on this restricted set of variables. This
results in a sub-optimal solution, because there might exist a
configuration in $\alfull\setminus \alfull'$ whose addition would
improve the solution. Such a variable can be found by writing the dual
of \progCG (the variables in this dual are denoted $p_i$):

\begin{equation*}
  \maxshort{\sum_{i} \nb{i}p_i} \quad \forall c\in\alfull, \sum_{i} \allf{i}{c}p_i \leq 1
\end{equation*}

The sub-optimal solution to \progCG provides a (possibly infeasible)
solution $p_i^*$ to this dual problem. Finding an improving
configuration is equivalent to finding a violated constraint, {\it
  i.e.} a valid configuration \conf{j} such that $\sum_i
\allf{i}{c}p_i^* > 1$. We can thus look for the configuration \conf{j}
which maximizes $\sum_i \allf{i}{c}p_i^*$. This sub-problem is a
knapsack problem, in which at most one item can be split. 

Let us denote this knapsack sub-problem as \dsknap. It can be
formulated as follows: given a set of item sizes $s_i$, item profits
$p_i$, a maximum capacity $C$ and a maximum number of elements $M$,
find a subset $J$ of items with weights $x_i$ such that $\cardi{J}
\leq M$, and $\sum_{i\in J} x_i s_i \leq C$ which maximizes the profit
$\sum_{i\in J} x_i p_i$. We first remark that solutions with at most
one split item are dominant for \dsknap (which justifies that we only
consider almost full valid configurations, \ie configurations with at most one split item). Then, we prove that this
problem is NP-complete, and we propose a pseudo-polynomial dynamic
programming algorithm to solve it. This algorithm can thus be used to
find which configuration to add to a partial solution of \progCG to improve it.
However, for comparison purposes, we also use a Mixed Integer
Programming formulation of this problem which is used in the
experimental evaluation in Section~\ref{sec.simus}.

\begin{remark}
  \label{rem.split.dominant}
  For any instance of \dsknap, there exists an optimal solution with at
  most one split item (\ie, at most one $i\in J$ for which $0 < x_i <
  1$). Furthermore, this split item, if there is one, has the smallest
  $\frac{p_i}{s_i}$ ratio.
\end{remark}
\begin{proof}
  This is a simple exchange argument: let us consider any solution $J$
  with weights $x_i$, and assume by renumbering that $J=\{1, \dots,
  m\}$ and that items are sorted by non-increasing $\frac{p_i}{s_i}$
  ratios. We can construct the following greedy solution: assign
  weight $x'_i = 1$ to the first item, then to the second, until the
  first value $k$ such that $\sum_{i \leq k} s_i > C$, and assign
  weight $x'_k = (C - \sum_{i < k} s_i)/s_k$ to item $k$. It is
  straightforward to see that this greedy solution is valid, splits at
  most one item, and has profit not smaller than the original
  solution.
\end{proof}

\begin{theorem}
  \label{thm.split.np}
  The decision version of \dsknap is NP-complete.
\end{theorem}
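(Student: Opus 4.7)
First I would verify membership in NP: a certificate consists of the set $J$ together with rational weights $x_i$ for $i \in J$, and by Remark~\ref{rem.split.dominant} one may assume at most one fractional entry (whose value is forced by the capacity constraint once the fully-selected items are fixed). Checking $|J| \le M$, $\sum_i x_i s_i \le C$, and $\sum_i x_i p_i \ge P$ is clearly polynomial, so \dsknap $\in$ NP.

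For NP-hardness I would reduce from the \emph{Exact $k$-Subset-Sum} problem: given positive integers $a_1,\dots,a_n$, a target $T$, and an integer $k$, decide whether there exists $S\subseteq\{1,\dots,n\}$ with $|S|=k$ and $\sum_{i\in S} a_i = T$. This problem is known to be NP-complete (it is a standard variant of \textsc{Subset-Sum} and contains Equal-Cardinality \textsc{Partition} as a special case, which itself reduces from \textsc{Partition} by padding with zeros).

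The reduction I propose is very compact: set $s_i = a_i$ and, crucially, $p_i = a_i + 1$ for every $i\in\{1,\dots,n\}$; take $C = T$, $M = k$, and target profit $P = T+k$. For any feasible $(J,\{x_i\})$,
\[
  \sum_{i\in J} x_i p_i \;=\; \sum_{i\in J} x_i a_i \;+\; \sum_{i\in J} x_i \;\le\; C + M \;=\; T+k,
\]
since $\sum x_i a_i \le C$ (capacity) and $\sum x_i \le |J|\le M$ (as $x_i\in[0,1]$). Hence profit $\ge T+k$ forces \emph{both} inequalities to be tight: $\sum x_i a_i = T$ and $\sum x_i = k$. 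The second equality, combined with $x_i\le 1$ and $|J|\le k$, implies $|J|=k$ and every $x_i = 1$, i.e.\ an integer solution with no split item; the first then asserts that this $k$-subset sums to exactly $T$. The converse direction (a valid $k$-subset yields a feasible \dsknap assignment of profit $T+k$) is immediate.

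The main conceptual obstacle is that the fractional nature of \dsknap tends to make naive subset-sum reductions collapse: with equal profit-to-size ratios one can always split a single large item to hit any target sum $\le C$, so an instance that is ``no'' in the source problem still becomes ``yes'' in \dsknap. The trick that sidesteps this is the unit offset $p_i = a_i + 1$, which couples the objective to both the total size \emph{and} the cardinality of $J$; this simultaneously penalises using fewer than $M$ items and rules out fractional weights, so the profit-maximal assignments are forced to be integer $k$-subsets of sum exactly $T$.
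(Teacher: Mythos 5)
Your proof is correct and follows essentially the same route as the paper: the paper reduces from equal-sized 2-Partition (the special case of your Exact $k$-Subset-Sum with $k=n$ and $T=\tfrac12\sum_i a_i$) using exactly the same gadget $p_i = a_i+1$, $s_i = a_i$, $M=k$, $C=T$, and the same tightness argument showing that profit $C+M$ forces all weights to $1$ and the subset sum to equal $C$. The only cosmetic difference is your choice of a slightly more general source problem, whose NP-completeness rests on the same equal-cardinality partition fact the paper invokes.
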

\begin{proof}
  We first notice that checking if a solution to \dsknap can be done
  in polynomial time, so this problem belongs in NP.

  We prove the NP-hardness by reduction to equal-sized 2-Partition:
  given $2n$ integers $a_i$, does there exist a set $J$ such that
  $\cardi{J} = n$ and $\sum_{i\in J} a_i = \frac{1}{2} \sum_i a_i$ ?
  From an instance $\mathcal{I}$ of this problem, we build the
  following instance $\mathcal{I}'$ of \dsknap: $p_i = 1+a_i$ and $s_i
  = a_i$, with $M = n$ and $C = \frac{1}{2} \sum_i a_i$. We claim that
  $\mathcal{I}$ has a solution if and only if $\mathcal{I}'$ has a
  solution of profit at least $n + C$.  Indeed, if $\mathcal{I}$ has a
  solution $J$, then $J$ is a valid solution for $\mathcal{I}'$ with
  profit $n+C$.

  Reciprocally, if $\mathcal{I}'$ has a solution $J$ with weights
  $x_i$ and profit $p \geq n+C$, then
  \begin{align*} 
    p &= \sum_{i\in J} x_ip_i =  \sum_{i \in J} x_i + \sum_{_ \in J} x_is_i\\
    & \leq \sum s_i + C \quad\text{since $J$ is a valid solution}\\
  \end{align*}

  We get $\sum_{i \in J} x_i + C \geq p \geq n + C$, hence $\sum_{i \in
    J} x_i \geq n$. Since $\cardi{J} \leq n$ and $x_i \leq 1$, this
  implies that all $x_i$ for $i \in J$ are equal to 1 and that
  $\cardi{J} = n$. Furthermore, $S = \sum_{i\in J} s_i$ verifies $S \leq
  B$ because $J$ is a valid solution for $\mathcal{I}'$, and $S \geq B$
  because $S = p - n$. Hence $J$ is thus a solution for $\mathcal{I}$.
\end{proof}

\begin{theorem}
  An optimal solution to \dsknap can be found in time $O(nCM)$ with a
  dynamic programming algorithm.
\end{theorem}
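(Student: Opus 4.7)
The plan is to combine Remark~\ref{rem.split.dominant} with a standard cardinality-constrained knapsack dynamic program. First, I sort the items by non-increasing ratio $p_i/s_i$ and renumber so that $p_1/s_1 \geq p_2/s_2 \geq \cdots \geq p_n/s_n$. By Remark~\ref{rem.split.dominant}, there is an optimal solution that splits at most one item, and if the item $k$ is split then every other selected item has ratio at least $p_k/s_k$, so under this ordering all other items chosen have index strictly smaller than $k$.

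Next I define the table
\[
f(k, j, c) = \text{max profit using an integer subset of } \{1,\dots,k\}
\text{ with at most } j \text{ items and total size at most } c,
\]
for $0 \leq k \leq n$, $0 \leq j \leq M$, $0 \leq c \leq C$. The recurrence is the standard one:
\[
f(k,j,c) =
\begin{cases}
  f(k-1,j,c) & \text{if } s_k > c,\\
  \max\bigl(f(k-1,j,c),\; f(k-1,j-1,c-s_k)+p_k\bigr) & \text{otherwise,}
\end{cases}
\]
with the base case $f(0,j,c)=0$. Each entry is computed in $O(1)$, so filling the table costs $O(nMC)$ in time and space.

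I then reconstruct the optimum of \dsknap by combining the table with one split decision. The optimum is
\[
\max\!\left\{\,f(n,M,C),\ \ \max_{1 \leq k \leq n}\ \max_{0 \leq c' \leq C}
\Bigl[\,f(k-1,\,M-1,\,c') \;+\; p_k \cdot \min\!\Bigl(1, \tfrac{C-c'}{s_k}\Bigr)\Bigr]\right\}.
\]
Correctness is where the ordering matters: by Remark~\ref{rem.split.dominant}, an optimal solution either takes every selected item integrally (captured by $f(n,M,C)$) or splits exactly one item, call it $k$, whose ratio is the smallest in the chosen set; once we have sorted by ratio, the remaining (integral) items must lie in $\{1,\dots,k-1\}$, which is exactly what $f(k-1,M-1,c')$ enumerates, and the leftover capacity $C-c'$ is filled with as much of item $k$ as possible.

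The inner double maximum is computed in $O(nC)$ after the table is filled, so the total running time is dominated by the DP table itself, giving $O(nMC)$. The only subtle point is the justification that the split item is always the last in the sorted order, but this is precisely the content of Remark~\ref{rem.split.dominant}, so no additional argument is needed.
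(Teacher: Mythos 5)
Your proof is correct and follows essentially the same approach as the paper: a standard cardinality-constrained knapsack table over ratio-sorted items, combined with Remark~\ref{rem.split.dominant} to handle the single split item as a final maximization over the prefix tables $f(k-1,M-1,\cdot)$. The only (cosmetic) difference is that you discretize the split step by maximizing over the integral capacity $c'$, whereas the paper writes it as a continuous maximum over the split fraction $x$; both yield the same $O(nCM)$ bound.
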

\begin{proof}
  We first assume that the items are sorted by non-increasing
  $\frac{p_i}{s_i}$ ratios. For any value $0\leq u\leq C$, $0\leq
  l\leq M$ and $0\leq i\leq n$, let us define $P(u, l, i)$ to be the
  maximum profit that can be reached with a capacity $u$, with at most
  $l$ items, and by using only items numbered from $1$ to $i$, without
  splitting. We can easily derive that
  \begin{multline*}
    P(u, l, i+1) = \\
    \begin{cases}
      0 & \text{if $l = 0$ or $i = 0$}\\
      \begin{split}
        \max(&P(u, l, i), \\
        &P(u-s_{i+1}, l-1, i) + p_{i+1})
      \end{split} & \text{if $u
        \geq s_{i+1}$ and $l > 0$}\\
      P(u, l, i) & \text{otherwise}
    \end{cases}
  \end{multline*}
  We can thus recursively compute $P(u, l, i)$ in $O(nCM)$ time. 

  Using Remark~\ref{rem.split.dominant}, we can use $P$ to compute
  $P'(i)$, defined as the maximum profit that can be reached in a
  solution where $i$ is split: 
  $$P'(i) = \max_{0 < x < 1} P(C-x s_i,
  M-1, i-1) + xp_i$$ 
  Computing $P'$ takes $O(nC)$ time. The optimal
  profit is then the maximum value between $P(C, M, n)$ (in which case
  no item is split) and $\max_{1\leq i \leq n} P'(i)$ (in this case
  item $i$ is split).
\end{proof}



\bigskip

\begin{algorithm}
\caption{\label{algo.packing}Summary of our two-step packing heuristic}
\begin{algorithmic}[1]
\Function{Homogeneous}{$\pen{i}$}
\State Binary Search for 
$\pd$ satisfying eq.~\eqref{eq.pd}
\State Compute $\nb{i} = g_i(X)$, then $\allh{i}$ according to eq.~\eqref{eq.homo.r} 
\State \Return{$\nb{i}, \allh{i}$}
\EndFunction
\Function{Heuristic}{}
\State Compute \pen{i} using $\zr{i}$ from normal law
\Repeat\State $\nb{i}, \allh{i} \gets \textsc{Homogeneous}(\pen{i})$
\State Compute $\nb{i}'$ from binomial distribution
\State Compute $\pen{i}$ from eq.~\eqref{eq.homo.r} with $\nb{i}'$ and $\allh{i}$
\Until{no $\pen{i}$ has changed by more than $\eps$}
\State $\mathcal{C} \gets$ greedy configurations
\Repeat 
\State Solve Eq(\ref{pb.colgen}) with configurations from $\mathcal{C}$
\State Get dual variables $p_i$
\State $c\gets $ solution of $\dsknap(p_i)$
\State $\mathcal{C} \gets \mathcal{C} \cup \{c\}$
\Until{Solution of \dsknap has profit $\leq 1$}
\EndFunction
\end{algorithmic}
\end{algorithm}

In this section, we have proposed a two-step algorithm to solve the
allocation problem under reliability constraints. The complete
algorithm is summarized in Algorithm~\ref{algo.packing}. The execution
time of the first loop is linear in $\ns$, and in practice it is
executed at most $10$ times, so the first step is linear (and in
practice very fast). The execution time of the second loop is also
polynomial: solving a linear program on rational numbers is very
efficient, and the dynamic program has complexity
$O(\ns\mem\cpu)$. Furthermore, in practice the number of
generated configurations is very low, of the order of \ns, whereas the
total number of possible configurations is $O(\ns^{\mem})$.

In the following (especially in Section~\ref{sec.simus}), we will
evaluate the performance of this algorithm on several
randomly-generated scenarios, in terms of running time and number of required
machines. However, since in our approach the reliability
constraints are taken into account in an approximate way, we are also
interested in evaluating the resulting reliability of generated allocations. This is done in the next Section.



\section{Reliability estimation}
\label{proba.est}

From previous results~\cite{beaumont:hal-00743524}, we know that computing
exactly the reliability of an allocation is a difficult problem: it is
actually a \#P-complete problem. A pseudo-polynomial dynamic
programming algorithm was proposed to solve this problem. However,
this algorithm assumes integral allocations and its running time is
linear in the number of machines. This makes it not feasible to use it
in the context of our paper, with large platforms and several hundreds
of services to estimate.


In this section, we explore another way to estimate the reliability
value of a given configuration. The algorithm presented here is
adapted from Algorithm 2.2 in~\cite{algo-rare}. The objective is to
compute a good approximation of the probability that a given service
fails, \ie $\pfail{i} = \pro{\alcp{i} < \demand{i}}$. A
straightforward approach for this kind of estimation is to generate a
large sample of scenarios for \alcp{i}, and compute the proportion of
scenarios in which $\alcp{i} < \demand{i}$. However, this strategy fails if the events that we aim at detecting are very rare (as reliability requirements violations are in our context since \pfail{i} is expected to be of the order of
\rel{i}, which could be or order $10^{-6}$ or lower). Indeed, it would
require to generate a very large number of samples (more than $10^{8}$
for the estimate of $\rel{i}=10^{-6}$). A more sensible approach, as
described in~\cite{algo-rare}, is to decompose the computation of
\pfail{i} into a product of conditional probabilities, whose values
are reasonably not too small (typically around $10^{-1}$), and hence can
be estimated with smaller sampling sizes. With this idea, estimating a
reliability of $10^{-6}$ would require $6$ iterations, each of which
uses a sampling of size $10^3$, which dramatically reduces the $10^8$
sampling size required by the direct approach.


\subsection{Formal description}

Since computing the reliability of each service can be done
independently, we consider here a given service \serv{i}, and for the
ease of notations, we omit the index $i$ until the end of this
section.

We rewrite \pfail{} in the following way:
\begin{multline*} \pfail{} = \pro{\alcp{} < \demt{}{1}} \\\times
\prod_{k \in \inter[2]{\nt{}}} \pro{\alcp{} < \demt{}{k} | \alcp{} <
\demt{}{k-1}},
\end{multline*} where the \demt{}{k}'s are thresholds such that
\[ \demt{}{1} > \demt{}{2} > \dots > \demt{}{\nt{}} = \demand{}. \] We
will see in the next subsection that those thresholds can be chosen
on-the-fly.

In order to express the probability distribution of \alcp{}, we use
the configuration description as in the previous section. We recall
that \allt{}{c} is the CPU allocated to service \serv{} in
configuration \conf{c} and \lam{c} is the number of machines that
follows this configuration. Then we have
\[ \alcp{} \sim \sum_{c | \serv{} \in \conf{c}} \allt{}{c} \times
\bino{\lam{c}}{1-\fail}.\] 

After a straightforward renumbering, and by denoting \nc{} the number
of different configurations in which the service \serv{} appears, we
obtain $\alcp{} = \sum_{c=1}^{\nc{}} \allt{}{c} \rvbino{c}$, such that
for all $c \in \inter{\nc{}}$, $\rvbino{c} \sim
\bino{\lam{c}}{1-\fail}$.

A value of the random variable \alcp{} is thus fully described by the
value of the random vector variable $(\rvbino{1},
\dots,\rvbino{\nc{}})$. In addition, when $Y=(\valbino{1}{}, \dots,
\valbino{\nc{}}{})$ is a value of this random vector, we define
$\sumall{Y} = \sum_c \valbino{c}{}$.

\subsection{Full algorithm}

The idea of the algorithm (described in details in
Algorithm~\ref{algo.all}) is to maintain a sample of random vectors
$Y_1, \dots, Y_\sasi$, distributed according to the original
distribution, conditional to $\sumall{Y} < \demt{}{k}$ at each step
$k$. Obtaining the sample for step $k+1$ is done in three
steps. First, the value of $\demt{}{k+1}$ is computed so that a $10\%$
fraction of the current sample satisfy $\sumall{Y} < \demt{}{k+1}$
(line~\ref{algo.all.thres}). Then, in the \textsc{Bootstrap} step, we
keep only the values that satisfies $\sumall{Y_s} < \demt{}{k+1}$, and
draw uniformly at random $\sasi$ vectors from this set, with
replacement. Finally, in the \textsc{Resample} step, we modify each
vector $Y_s$ of this set, one coordinate after the other, by
generating a new value $\valbino{c}{s}$ according to the distribution
$\bino{\lam{c}}{1-\fail}$ conditional to $\sumall{Y_s} <
\demt{}{k+1}$. This ensures that the new sample is distributed
according to the required conditional distribution. At each step, an
unbiased estimate of \pro{\alcp{} < \demt{}{k+1} | \alcp{} <
  \demt{}{k}} is the number of vectors which satisfy $\sumall{Y_s} <
\demt{}{k+1}$, divided by the total sampling size \sasi.

The \textsc{Resample} step is described in more details in
Algorithm~\ref{algo.resample}. In order to generate a new value
$\valbino{c}{s}$ conditional to $\sumall{Y_s} < \demt{}{k+1}$, it is
sufficient to compute the total CPU allocated in the other
configurations $\dcur= \sum_{c' \neq c} \allt{}{c'}$: then, the
condition is equivalent to $\valbino{c}{s} \leq
\frac{\demt{}{k+1}-\dcur}{\allt{}{c}}$, and this amounts to generating
according to a truncated binomial distribution.


\begin{algorithm}[h]
  \caption{\label{algo.resample}Resampling}
  \begin{algorithmic}[1] \Function{Resample}{$\thres$} \For{$s \in
\inter{\sasi}$} \For{$c \in \inter{\nc{}}$} \State $\dcur \gets
\sum_{c' \neq c} \allt{}{c'} \valbino{c'}{s}$ \State $\minmach \gets
\frac{\thres-\dcur}{\allt{}{c}}$ \State \label{algo.l.conbino}Draw
\valbino{c}{s} following \bino{\lam{c}}{1-\fail} conditional to
$\valbino{c}{s}< \minmach$
    \EndFor
    \EndFor
    \EndFunction
  \end{algorithmic}
\end{algorithm}

\begin{algorithm}[h]
  \caption{\label{algo.all}Adaptive Reliability estimate}
  \begin{algorithmic}[1] \Function{ReliabilityEstimate}{} 
\State$\currel \gets 1$
 \State Draw a sample $(Y_1,\dots,Y_\sasi)$ 
\State
$\curthr \gets \argmin_t \{ \frac{\# \{ Y_s | \sumall{Y_s} < t
\}}{\sasi} > 10\% \} $ 
\State $\curthr \gets \max(\curthr, \demand{})$
\State $\currel \gets \currel \times \frac{\# \{ Y_s |
  \sumall{Y_s} < \curthr \}}{\sasi}$ 
\While{$\curthr > \demand{}$}
\State \textsc{Bootstrap}$(\curthr)$ 
\State \textsc{Resample}$(\curthr)$ 
\label{algo.all.thres}\State $\curthr \gets \argmin_t \{\frac{\# \{ Y_s | \sumall{Y_s} < t
  \}}{\sasi} > 10\% \} $ 
\State $\curthr \gets \max(\curthr, \demand{})$
 \State $\currel \gets \currel \times \frac{\# \{ Y_s | \sumall{Y_s} < \curthr \}}{\sasi}$
 \EndWhile 
\State \Return $\currel$
\EndFunction
  \end{algorithmic}
\end{algorithm}

\section{Simulations}
\label{sec.simus}

In this section, we perform a large set of simulations with two main
objectives.  On the one hand, we assess the performance of Algorithm~\ref{algo.packing} and we observe the number of
machines used, the execution time and the compliance with the
reliability requests. On the other hand, we study the behavior of the
reliability estimation algorithm in Section~\ref{sec.simus.estimate}.

Simulations were conducted using a node based on two quad-core Nehalem
Intel Xeon X5550, and the source code of all heuristics and simulations
is publicly available on the Web~\cite{simus-liopaul}.

\subsection{Resource Allocation Algorithms}
\label{sec.nosharing}
\label{sec.simus.alloc}

\newcommand{\hdu}{{\small\textsf{no\_sharing}}\xspace}
\newcommand{\hcgpd}{{\small\textsf{colgen\_pd}}\xspace}
\newcommand{\hcg}{{\small\textsf{colgen}}\xspace}
\newcommand{\hcgf}{{\small\textsf{colgen\_float}}\xspace}

\label{sec.nosharing}
\label{sec.simus.alloc}

In order to give a point of comparison to describe the contribution of
our algorithm, we have designed an additional simple greedy heuristic.
This heuristic is based on an exclusivity principle: two different
services are not allowed to share the same machine. Each service is
thus allocated the whole CPU power of some number of machines. The
appropriate number of machines for a given service is the minimum
number of machines that have to be dedicated to this service, so that
the reliability constraint is met. This can be easily computed using
the cumulative distribution function of a binomial
distribution~\cite{lib-binomial} and a binary search. We greedily
assign the necessary number of machines to each service and obtain an
allocation that fulfills all reliability constraints. This heuristic is
named \hdu.

In the algorithms based on column generation, the linear program for Eq(\ref{pb.colgen}) is solved in rational numbers, because its
integer version is too costly to solve optimally. An integer solution
is obtained by rounding up all values of a given
configuration, so as to ensure that the reliability constraints are
still fulfilled. This increases the number of used machines, so we
also keep the rational solution as a lower bound. The ceiled variant
that uses a linear program to solve the \dsknap problem is called
\hcg, while \hcgpd runs the dynamic programming algorithm. Finally,
\hcgf denotes the lower bound. The legend that applies for all the
graphs in this Section is given in Figure~\ref{fig.key}.
\begin{figure}[h!]
  \centering
  \includegraphics[width=.45\textwidth]{key2}
  \caption{Key}
  \label{fig.key}
\end{figure}

\newcommand{\scunifo}{\textsc{Uniform}\xspace}
\newcommand{\scbival}{\textsc{Bivalued}\xspace}

\subsection{Simulation settings}
\label{sec.simus.set}

We explore two different kinds of scenarios in our experiments. In the
first scenario, called \scunifo, we envision a cloud where all services have close demands. The CPU demand of a service is
drawn uniformly between the equivalent CPU capacity of 5 and 50 machines, and we
vary the number of services from 20 to 300. In this last case, the overall
required number of machines is more than 8000 on average. In the
\scbival scenario, two classes of services request for resources. Three
big services set their demand between 900 and 1100 machines, while the
298 other clients need from 5 to 15 machines, so that the machines are
fairly shared between the two classes of services. This leads to more
than 6000 machines in average.

In both cases, the reliability request for each service is taken to be
$10^{-X}$, where $X$ is drawn uniformly between $2$ and $8$. On the
platform side, we vary the memory capacity of the machines from 5 to
10, and the failure probability of a machine is set to $0.01$.

\subsection{Number of required machines}
\label{sec.simus.nm}

The number of required machines by each heuristic is depicted in
Figure~\ref{fig.nm}.
The first observation is that the rounding step increases the number
of machines used by at most $2.5\%$. This can be explained by noting
that the number of different configurations that are actually used in
a solution of \hcg is less than the number of services, hence each
configuration is used a relatively large number of times.

Another observation is that the \hdu heuristic is sensitive to the
memory capacity, and, as expected, its quality decreases compared to
the column generation heuristics (by construction, given a set of
services, \hdu will return the same solution whatever the memory
capacity) and becomes $12.5\%$ worse than the lower bound in the
\scunifo scenario. Services are indeed distributed on more machines
in the column generation heuristics, and hence need less replication
to fulfill the reliability constraints.

Finally, concerning the final number of machines, \hcg and \hcgpd are very
close, which shows that the discretization required for the dynamic
programming algorithm does not induce a noticeable loss of quality. 

\begin{figure}[t!]
  \centering
\subfiguretopcaptrue
\subfigure[\scbival scenario]{%
  \includegraphics[width=.80\textwidth]{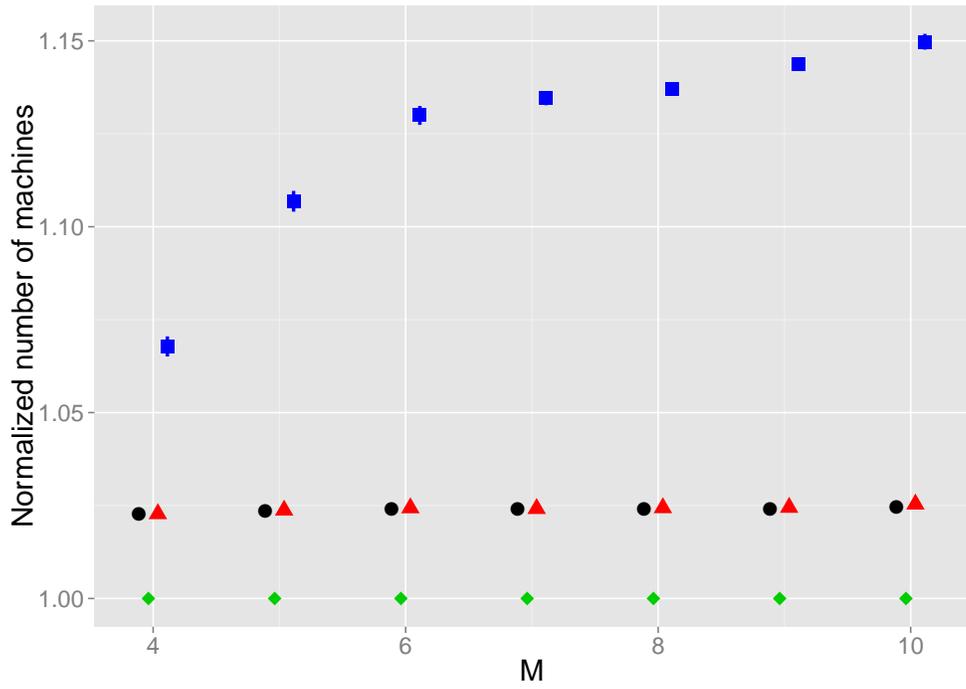}
}
\subfiguretopcapfalse%
\subfigure[\scunifo scenario]{%
  \includegraphics[width=.80\textwidth]{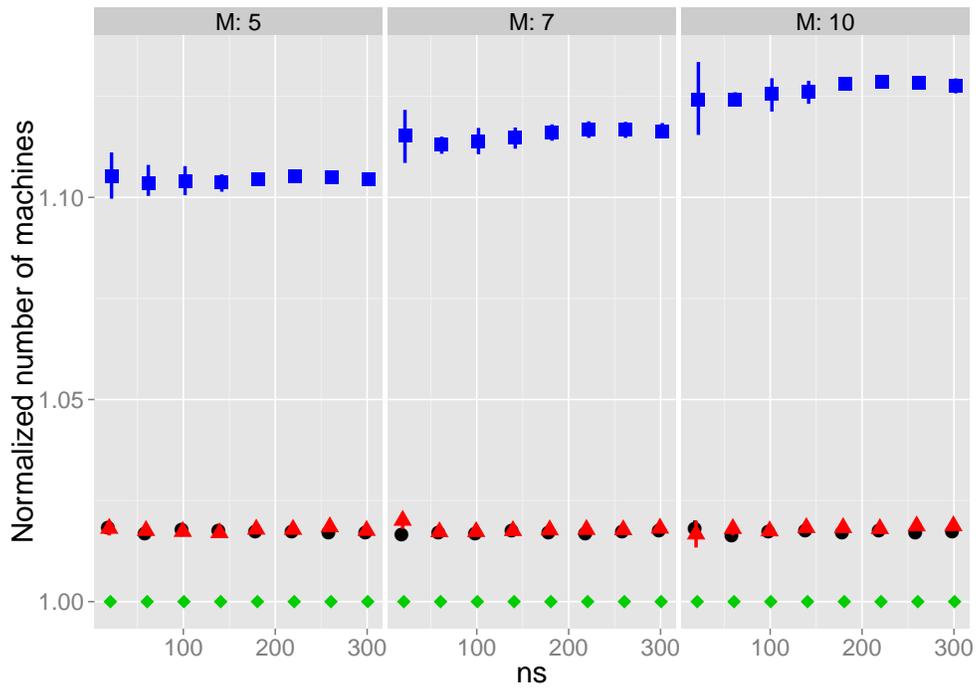}
}
\caption{Number of required machines}
\label{fig.nm}
\end{figure}


\newcommand{\se}[1]{\ema{#1\,s}}
\newcommand{\mi}[1]{\ema{#1\,\mathit{min}}}
\subsubsection{Execution time}
\label{sec.simus.time}

We represent in Figure~\ref{fig.time}
the execution time of all heuristics, in both scenarios.  There
appears a difference between \hcg and \hcgpd. In the \scunifo scenario
with low memory capacity, the execution time of \hcg is at the same
time very high and very unstable (it may reach \mi{20}), while \hcgpd
remains under \se{30}. With higher memory capacity, the execution time
of \hcg improves while \hcgpd gets slower; they become similar when 10
services are allowed on the same machine.

On the other hand, in the \scbival case, the execution time of both
variants increase when the memory capacity increases, and \hcgpd is
always better than \hcg.

Finally in all cases \hdu confirms that it is a very cheap heuristic,
and its execution time never exceeds \se{0.05}.

\begin{figure}[t!]
  \centering
\subfiguretopcaptrue
\subfigure[\scbival scenario]{%
\includegraphics[width=.80\textwidth]{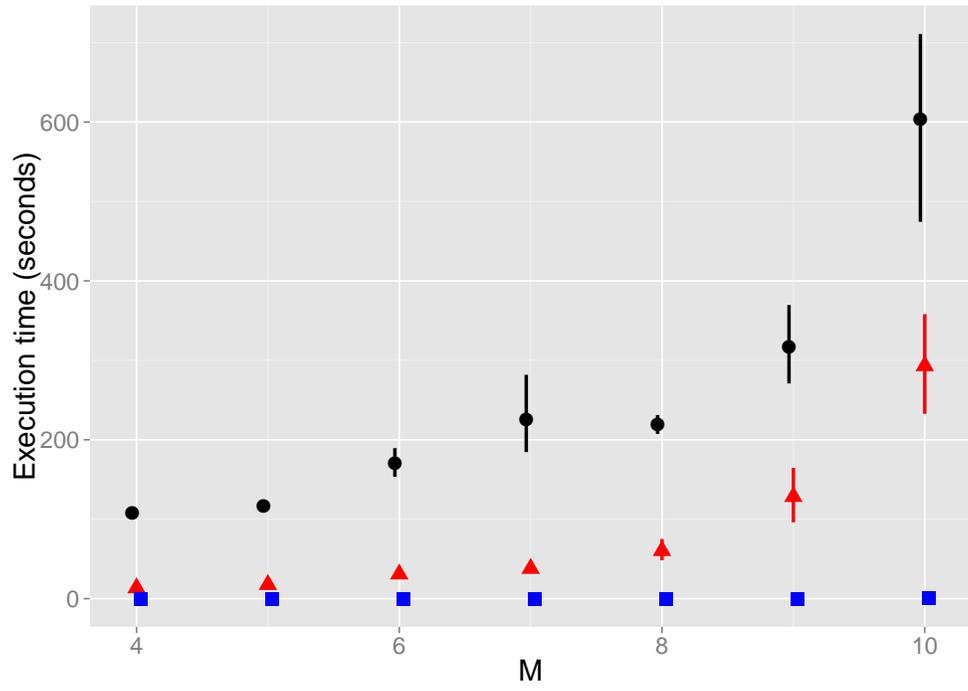}
}
\subfiguretopcapfalse%
\subfigure[\scunifo scenario]{%
\includegraphics[width=.80\textwidth]{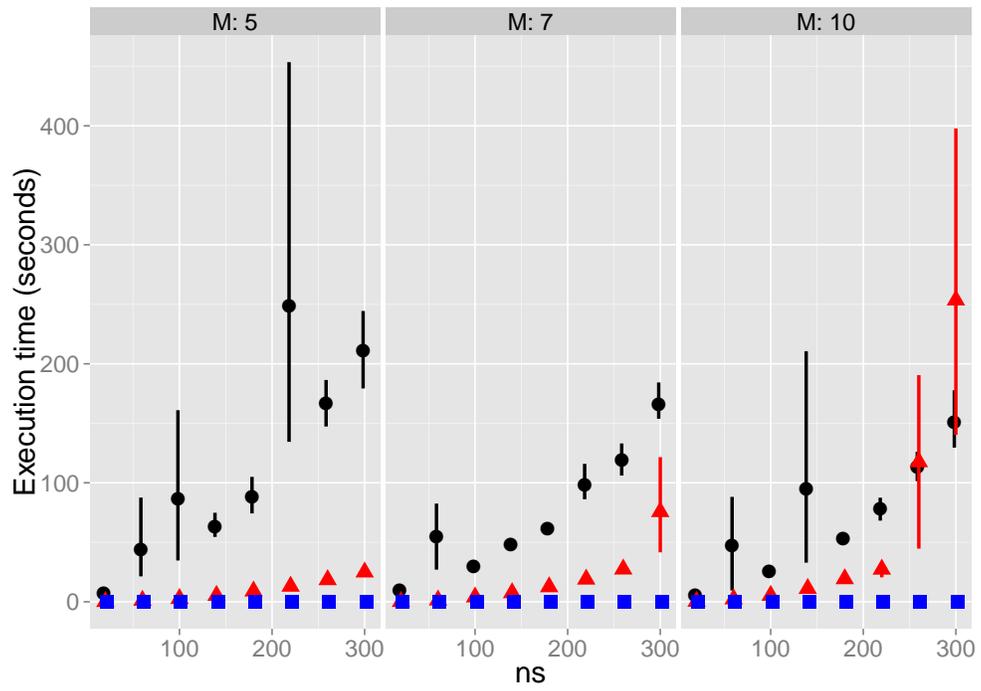}
}
\caption{Execution time}
\label{fig.time}
\end{figure}


\subsection{Reliability}
\label{sec.simus.rel}

We represent in Figure~\ref{fig.rel}
the compliance of the services with their reliability constraint. This
constraint is met if the point is below the black straight
line. Because of the rounding step, the column generation heuristics
fulfill easily the reliability bounds in average. Moreover, we observed (results are not displayed here due to lack of space) that the worse cases are very close to the line but remain below it,
which is intended by construction of the heuristics.

In the \scbival scenario, the column generation heuristics produce
allocations that are even more reliable than in the \scunifo
scenario. This come from the fact that the big services are assigned
to a large number of different configurations, and hence gain even
more CPU after the rounding step.

There is a clear double advantage to the column generation-based algorithm,
and especially \hcgpd, against \hdu: it leads to more reliable
allocations with a noticeably smaller number of machines.

\begin{figure}[t!]
  \centering
\subfiguretopcaptrue
\subfigure[\scbival scenario]{%
 \includegraphics[width=.80\textwidth]{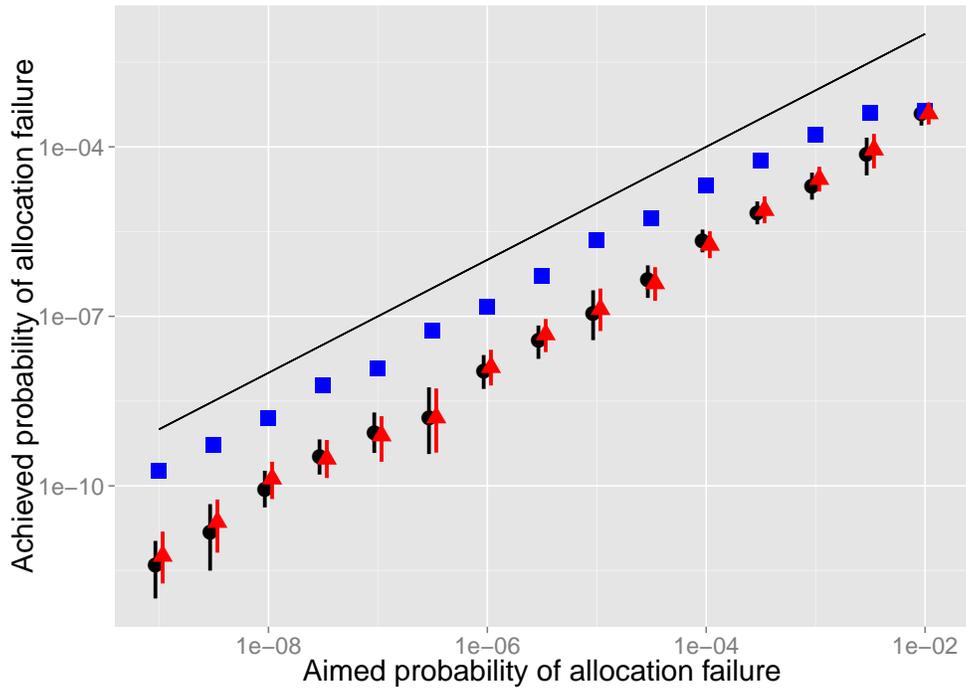}
}
\subfiguretopcapfalse%
\subfigure[\scunifo scenario]{%
\includegraphics[width=.80\textwidth]{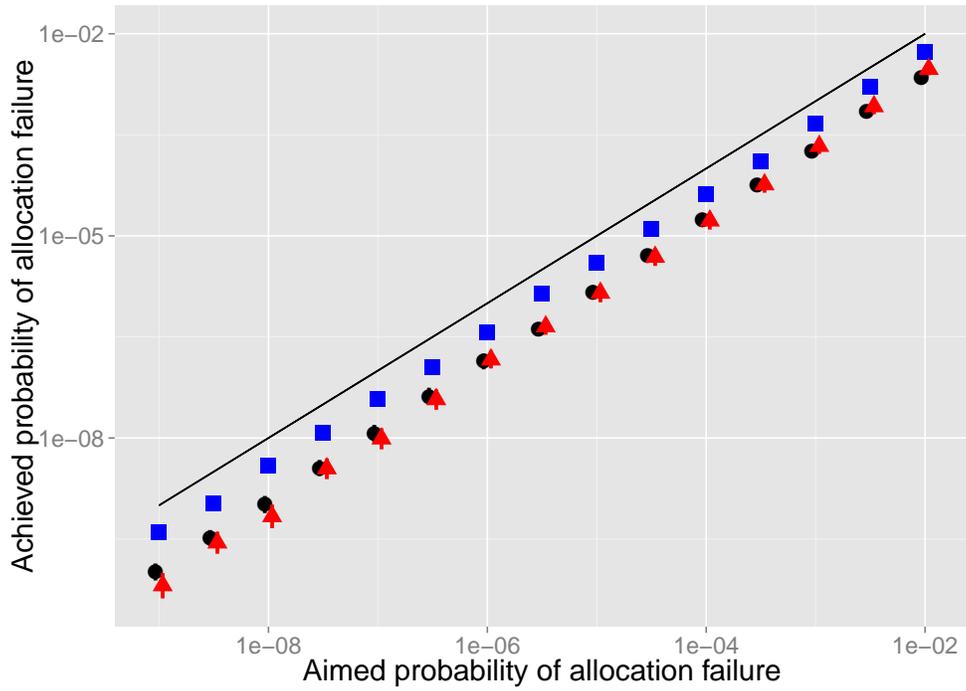}
}
\caption{Reliability constraints}
\label{fig.rel}
\end{figure}


\subsection{Probability estimation algorithm}
\label{sec.simus.estimate}

In Figure~\ref{fig.at} we plot the execution time of the probability
estimation algorithm of the reliability of a service as a function of the actual
failure probability of the allocation. Obviously, when the failure
probability decreases, the execution time increases, since the event
that we try to capture is rarer. This estimate algorithm is efficient,
since it can estimate an event of probability $10^{-17}$ in about
\se{40}. In our particular case, it is possible to further lower
this execution time if we focus only on checking whether the failure
probability requirement is not exceeded. Indeed, the requirements are
lower than $10^{-9}$. By stopping the algorithm early, it is possible
to determine whether the allocation is valid in at most \se{15}.

We can remark that the estimate of failure probabilities in a solution
returned by \hdu is not expensive, since each service is allocated to
exactly one configuration. Hence, at each step of the algorithm, we have
only one draw for one binomial distribution, which is not the case with
solutions that are provided by the other heuristics.

\begin{figure}[t]
  \centering
  \includegraphics[width=.80\textwidth]{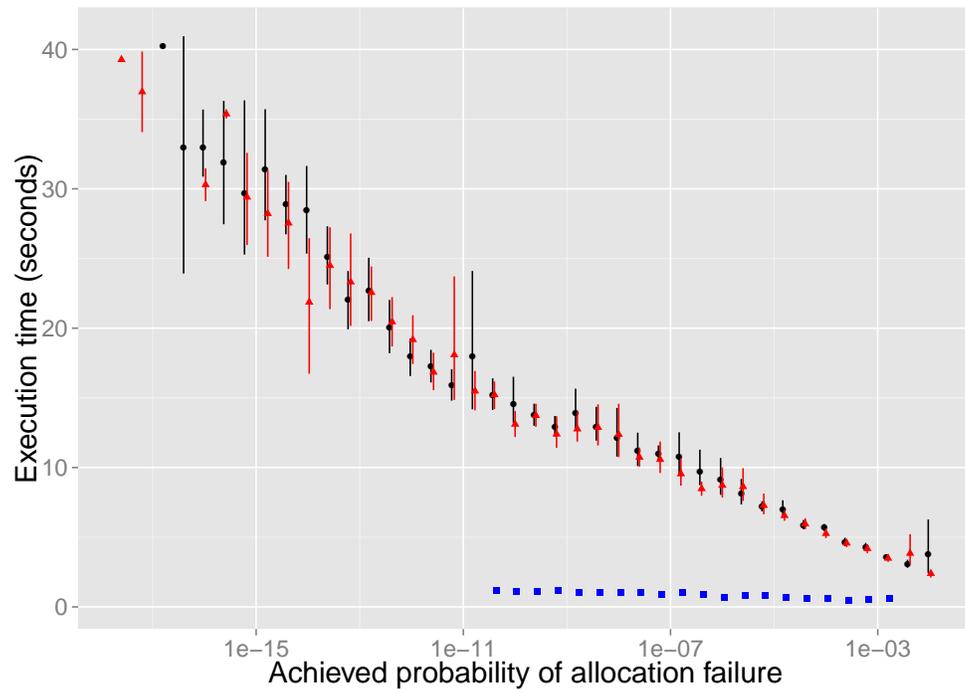}
  \caption{Execution Time of Probability Estimation Algorithm}
  \label{fig.at}
\end{figure}

\section{Conclusion}
\label{sec.conclusion} 

The evolution of large computing platforms makes fault-tolerance issues crucial. With this respect, this paper considers a simple setting, with a set of services handling requests on an homogeneous cloud platform. To deal with fault tolerance issues, we assume that each service comes with a global demand and a reliability constraint. Our contribution follows two directions. First, with borrow and adapt from Applied Probability literature sophisticated  techniques for estimating the failure probability of an allocation, that remains efficient even if the considered probability is very low ($10^{-10}$ for instance). Second, we borrow and adapt from the Mathematical Programming and Operations Research literature the use of Column Generation techniques, that enable to solve efficiently some classes of linear programs. The use of both techniques enables to solve in an efficient manner the resource allocation problem that we consider, under a realistic settings (both in terms of size of the problem and characteristics of the applications, for instance discrete unsplittable memory constraints) and we believe that it can be extended to many other fault-tolerant allocation problems.

\bibliographystyle{IEEEtran} \bibliography{IEEEabrv,biblio}

\begin{thebibliography}{10}
\providecommand{\url}[1]{#1}
\csname url@samestyle\endcsname
\providecommand{\newblock}{\relax}
\providecommand{\bibinfo}[2]{#2}
\providecommand{\BIBentrySTDinterwordspacing}{\spaceskip=0pt\relax}
\providecommand{\BIBentryALTinterwordstretchfactor}{4}
\providecommand{\BIBentryALTinterwordspacing}{\spaceskip=\fontdimen2\font plus
\BIBentryALTinterwordstretchfactor\fontdimen3\font minus
  \fontdimen4\font\relax}
\providecommand{\BIBforeignlanguage}[2]{{%
\expandafter\ifx\csname l@#1\endcsname\relax
\typeout{** WARNING: IEEEtran.bst: No hyphenation pattern has been}%
\typeout{** loaded for the language `#1'. Using the pattern for}%
\typeout{** the default language instead.}%
\else
\language=\csname l@#1\endcsname
\fi
#2}}
\providecommand{\BIBdecl}{\relax}
\BIBdecl

\bibitem{shih2010performance}
W.~Shih, S.~Tseng, and C.~Yang, ``Performance study of parallel programming on
  cloud computing environments using mapreduce,'' in \emph{International
  Conference on Information Science and Applications (ICISA)}.\hskip 1em plus
  0.5em minus 0.4em\relax IEEE, 2010, pp. 1--8.

\bibitem{dean2008mapreduce}
J.~Dean and S.~Ghemawat, ``Mapreduce: Simplified data processing on large
  clusters,'' \emph{Communications of the ACM}, vol.~51, no.~1, pp. 107--113,
  2008.

\bibitem{zaharia2008improving}
M.~Zaharia, A.~Konwinski, A.~Joseph, R.~Katz, and I.~Stoica, ``Improving
  mapreduce performance in heterogeneous environments,'' in \emph{Proceedings
  of the 8th USENIX conference on Operating systems design and
  implementation}.\hskip 1em plus 0.5em minus 0.4em\relax USENIX Association,
  2008, pp. 29--42.

\bibitem{bougeret2011checkpointing}
M.~Bougeret, H.~Casanova, M.~Rabie, Y.~Robert, and F.~Vivien, ``Checkpointing
  strategies for parallel jobs,'' in \emph{High Performance Computing,
  Networking, Storage and Analysis (SC), 2011 International Conference
  for}.\hskip 1em plus 0.5em minus 0.4em\relax IEEE, 2011, pp. 1--11.

\bibitem{cappello2010checkpointing}
F.~Cappello, H.~Casanova, and Y.~Robert, ``Checkpointing vs. migration for
  post-petascale supercomputers,'' \emph{ICPP'2010}, 2010.

\bibitem{bouteiller2013multi}
A.~Bouteiller, F.~Cappello, J.~Dongarra, A.~Guermouche, T.~H{\'e}rault, and
  Y.~Robert, ``Multi-criteria checkpointing strategies: response-time versus
  resource utilization,'' in \emph{Euro-Par 2013 Parallel Processing}.\hskip
  1em plus 0.5em minus 0.4em\relax Springer, 2013, pp. 420--431.

\bibitem{wang2009improving}
C.~Wang, Z.~Zhang, X.~Ma, S.~S. Vazhkudai, and F.~Mueller, ``Improving the
  availability of supercomputer job input data using temporal replication,''
  \emph{Computer Science-Research and Development}, vol.~23, no. 3-4, pp.
  149--157, 2009.

\bibitem{ferreira2011evaluating}
K.~Ferreira, J.~Stearley, J.~Laros~III, R.~Oldfield, K.~Pedretti,
  R.~Brightwell, R.~Riesen, P.~Bridges, and D.~Arnold, ``Evaluating the
  viability of process replication reliability for exascale systems,'' in
  \emph{Proceedings of 2011 International Conference for High Performance
  Computing, Networking, Storage and Analysis}.\hskip 1em plus 0.5em minus
  0.4em\relax ACM, 2011, p.~44.

\bibitem{zhang2010cloud}
Q.~Zhang, L.~Cheng, and R.~Boutaba, ``{Cloud computing: state-of-the-art and
  research challenges},'' \emph{Journal of Internet Services and Applications},
  vol.~1, no.~1, pp. 7--18, 2010.

\bibitem{armbrust2009above}
M.~Armbrust, A.~Fox, R.~Griffith, A.~Joseph, R.~Katz, A.~Konwinski, G.~Lee,
  D.~Patterson, A.~Rabkin, I.~Stoica \emph{et~al.}, ``{Above the clouds: A
  berkeley view of cloud computing},'' \emph{EECS Department, University of
  California, Berkeley, Tech. Rep. UCB/EECS-2009-28}, 2009.

\bibitem{CirneInvited}
W.~Cirne and E.~Frachtenberg, ``Web-scale job scheduling,'' in \emph{Job
  Scheduling Strategies for Parallel Processing}.\hskip 1em plus 0.5em minus
  0.4em\relax Springer, 2013, pp. 1--15.

\bibitem{amazon}
``Amazon elastic compute cloud (amazon ec2),''
  \url{http://aws.amazon.com/fr/ec2/}.

\bibitem{van2009sla}
H.~Van, F.~Tran, and J.~Menaud, ``{SLA-aware virtual resource management for
  cloud infrastructures},'' in \emph{IEEE Ninth International Conference on
  Computer and Information Technology}.\hskip 1em plus 0.5em minus 0.4em\relax
  IEEE, 2009, pp. 357--362.

\bibitem{calheiros2009heuristic}
R.~Calheiros, R.~Buyya, and C.~De~Rose, ``{A heuristic for mapping virtual
  machines and links in emulation testbeds},'' in \emph{ICPP}.\hskip 1em plus
  0.5em minus 0.4em\relax IEEE, 2009, pp. 518--525.

\bibitem{christopherTPDS}
O.~Beaumont, L.~Eyraud-Dubois, H.~Rejeb, and C.~Thraves, ``{Heterogeneous
  Resource Allocation under Degree Constraints},'' \emph{IEEE Transactions on
  Parallel and Distributed Systems}, 2012.

\bibitem{berl2010energy}
A.~Berl, E.~Gelenbe, M.~Di~Girolamo, G.~Giuliani, H.~De~Meer, M.~Dang, and
  K.~Pentikousis, ``{Energy-efficient cloud computing},'' \emph{The Computer
  Journal}, vol.~53, no.~7, p. 1045, 2010.

\bibitem{beloglazov2010energy}
A.~Beloglazov and R.~Buyya, ``{Energy efficient allocation of virtual machines
  in cloud data centers},'' in \emph{2010 10th IEEE/ACM International
  Conference on Cluster, Cloud and Grid Computing}.\hskip 1em plus 0.5em minus
  0.4em\relax IEEE, 2010, pp. 577--578.

\bibitem{GareyJohnson}
M.~R. Garey and D.~S. Johnson, \emph{Computers and Intractability, a Guide to
  the Theory of {NP}-Completeness}.\hskip 1em plus 0.5em minus 0.4em\relax W.\
  H. Freeman and Company, 1979.

\bibitem{EpsteinvanStee07binpackresaumg}
L.~Epstein and R.~van Stee, ``Online bin packing with resource augmentation.''
  \emph{Discrete Optimization}, vol.~4, no. 3-4, pp. 322--333, 2007.

\bibitem{Hochbaum97}
D.~Hochbaum, \emph{Approximation Algorithms for NP-hard Problems}.\hskip 1em
  plus 0.5em minus 0.4em\relax PWS Publishing Company, 1997.

\bibitem{barnhart1998branch}
C.~Barnhart, E.~L. Johnson, G.~L. Nemhauser, M.~W. Savelsbergh, and P.~H.
  Vance, ``Branch-and-price: Column generation for solving huge integer
  programs,'' \emph{Operations research}, vol.~46, no.~3, pp. 316--329, 1998.

\bibitem{desrosiers2005primer}
J.~Desrosiers and M.~E. L{\"u}bbecke, \emph{A primer in column
  generation}.\hskip 1em plus 0.5em minus 0.4em\relax Springer, 2005.

\bibitem{replication_availability}
K.~Ranganathan, A.~Iamnitchi, and I.~Foster, ``Improving data availability
  through dynamic model-driven replication in large peer-to-peer communities,''
  in \emph{Cluster Computing and the Grid, 2002. IEEE/ACM International
  Symposium on}, 2002.

\bibitem{replication_cirne_1}
D.~da~Silva, W.~Cirne, and F.~Brasileiro, ``Trading cycles for information:
  Using replication to schedule bag-of-tasks applications on computational
  grids,'' in \emph{Euro-Par 2003 Parallel Processing}, ser. Lecture Notes in
  Computer Science, H.~Kosch, L.~B\"osz\"orm\'enyi, and H.~Hellwagner,
  Eds.\hskip 1em plus 0.5em minus 0.4em\relax Springer Berlin / Heidelberg,
  2003, vol. 2790, pp. 169--180.

\bibitem{replication_datagrid}
\BIBentryALTinterwordspacing
M.~Lei, S.~V. Vrbsky, and X.~Hong, ``An on-line replication strategy to
  increase availability in data grids,'' \emph{Future Generation Computer
  Systems}, vol.~24, no.~2, pp. 85 -- 98, 2008. [Online]. Available:
  \url{http://www.sciencedirect.com/science/article/pii/S0167739X07000830}
\BIBentrySTDinterwordspacing

\bibitem{replication_DB}
\BIBentryALTinterwordspacing
H.-I. Hsiao and D.~J. Dewitt, ``A performance study of three high availability
  data replication strategies,'' \emph{Distributed and Parallel Databases},
  vol.~1, pp. 53--79, 1993, 10.1007/BF01277520. [Online]. Available:
  \url{http://dx.doi.org/10.1007/BF01277520}
\BIBentrySTDinterwordspacing

\bibitem{replication_cirne_2}
E.~Santos-Neto, W.~Cirne, F.~Brasileiro, and A.~Lima, ``Exploiting replication
  and data reuse to efficiently schedule data-intensive applications on
  grids,'' in \emph{Job Scheduling Strategies for Parallel Processing}, ser.
  Lecture Notes in Computer Science, D.~Feitelson, L.~Rudolph, and
  U.~Schwiegelshohn, Eds.\hskip 1em plus 0.5em minus 0.4em\relax Springer
  Berlin / Heidelberg, 2005, vol. 3277, pp. 54--103.

\bibitem{dongarra2009international}
J.~Dongarra, P.~Beckman, P.~Aerts, F.~Cappello, T.~Lippert, S.~Matsuoka,
  P.~Messina, T.~Moore, R.~Stevens, A.~Trefethen \emph{et~al.}, ``The
  international exascale software project: a call to cooperative action by the
  global high-performance community,'' \emph{International Journal of High
  Performance Computing Applications}, vol.~23, no.~4, pp. 309--322, 2009.

\bibitem{eesi}
``Eesi, "the european exascale software initiative", 2011,''
  \url{http://www.eesi-project.eu/pages/menu/homepage.php}.

\bibitem{cappello2009fault}
F.~Cappello, ``Fault tolerance in petascale/exascale systems: Current
  knowledge, challenges and research opportunities,'' \emph{International
  Journal of High Performance Computing Applications}, vol.~23, no.~3, pp.
  212--226, 2009.

\bibitem{beaumont:hal-00743524}
O.~Beaumont, L.~Eyraud-Dubois, and H.~Larchev{\^e}que, ``Reliable service
  allocation in clouds,'' in \emph{IPDPS'13 IEEE International Parallel \&
  Distributed Processing Symposium}, 2013.

\bibitem{valiant1979complexity}
L.~Valiant, ``The complexity of enumeration and reliability problems,''
  \emph{SIAM J. Comput.}, vol.~8, no.~3, pp. 410--421, 1979.

\bibitem{provan1983complexity}
J.~Provan and M.~Ball, ``The complexity of counting cuts and of computing the
  probability that a graph is connected,'' \emph{SIAM Journal on Computing},
  vol.~12, p. 777, 1983.

\bibitem{bodlaender2004note}
H.~Bodlaender and T.~Wolle, ``A note on the complexity of network reliability
  problems,'' \emph{UU-CS}, no. 2004-001, 2004.

\bibitem{algo-rare}
Z.~I. Botev and D.~P. Kroese, ``An efficient algorithm for rare-event
  probability estimation, combinatorial optimization, and counting,''
  \emph{Methodology and Computing in Applied Probability}, vol.~10, no.~4, pp.
  471--505, 2008.

\bibitem{beaumont2013hipc}
O.~Beaumont, P.~Duchon, and P.~Renaud-Goud, ``Approximation algorithms for
  energy minimization in cloud service allocation under reliability
  constraints,'' in \emph{HIPC'2013, IEEE international conference on High
  Performance Computing, Bangalore}, 2013.

\bibitem{chernoff}
H.~Chernoff, ``A measure of asymptotic efficiency for tests of a hypothesis
  based on the sum of observations,'' \emph{The Annals of Mathematical
  Statistics}, vol.~23, no.~4, pp. 493--507, 1952.

\bibitem{hoeffding}
W.~Hoeffding, ``Probability inequalities for sums of bounded random
  variables,'' \emph{Journal of the American Statistical Association}, vol.~58,
  no. 301, pp. 13--30, 1963.

\bibitem{nous-resilience}
O.~Beaumont, L.~Eyraud-Dubois, P.~Pesneau, and P.~Renaud-Goud, ``Reliable
  service allocation in clouds with memory and capacity constraints,'' in
  \emph{Resilience -- EuroPar workshop}, 2013.

\bibitem{lib-binomial}
``Gsl library,'' \url{http://www.gnu.org/software/gsl}.

\bibitem{simus-liopaul}
``Source code of the simulations,''
  \url{http://graal.ens-lyon.fr/~prenaud/colgen/}.

\end{thebibliography}

\end{document}